\author{Ilya Volnyansky} 
\title{} 
\date{} 
\newtheorem{theorem}{Theorem}[chapter] 
\newtheorem{lemma}[theorem]{Lemma} 
\theoremstyle{definition} 
\newtheorem{definition}[theorem]{Definition} 
\theoremstyle{remark}
\newtheorem{example}{Example}[chapter]
\begin{document} 
\begin{titlepage} 
\begin{center} 

\LARGE Curse of Dimensionality in the Application of Pivot-based Indexes to the Similarity Search Problem \\ 
\vspace{1in} 
\large Ilya Volnyansky\\ 
\vspace{.1in} 
Thesis Supervisor: Vladimir Pestov\\ 
\vspace{.5in} 
Thesis Submitted to the Faculty of Graduate and Postdoctoral Studies in Partial Fulfilment of the Requirements for the Degree of Master of Science in Mathematics\footnote[1]{The M.Sc. program is a joint program with Carleton University, administered by the Ottawa-Carleton Institute of Mathematics and Statistics}\\ 
\vspace{.5in} 
Department of Mathematics and Statistics\\ 
Faculty of Science\\ 
University of Ottawa\\ 
\vspace{1in} 
\copyright Ilya Volnyansky, Ottawa, Canada, 2009 
\end{center} 
\end{titlepage} 
\chapter*{Abstract} 
In this work we study the validity of the so-called curse of dimensionality for indexing of databases for similarity search. 
We perform an asymptotic analysis, with a test model based on a sequence of metric spaces $(\Omega_d)$ from which we pick datasets $X_d$ in an i.i.d. fashion. 
We call the subscript $d$ the dimension of the space $\Omega_d$ (e.g. for $\mathbb{R}^d$ the dimension is just the usual one) and we allow the size of the dataset $n=n_d$ to be such that $d$ is superlogarithmic but subpolynomial in $n$. 

We study the asymptotic performance of pivot-based indexing schemes where the number of pivots is $o(n/d)$. 
We pick the relatively simple cost model of similarity search where we count each distance calculation as a single computation and disregard the rest. 

We demonstrate that if the spaces $\Omega_d$ exhibit the (fairly common) concentration of measure phenomenon the performance of similarity search using such indexes is asymptotically linear in $n$. That is for large enough $d$ the difference between using such an index and performing a search without an index at all is negligeable. 
Thus we confirm the curse of dimensionality in this setting.

\chapter*{Introduction} 
\label{s:introduction} 

One often hears the complaint that we live in a data rich but information poor society. 
Indeed, enabled by advances in IT hardware all sorts of data are being gathered at astonishing rates. 
The force behind this is the assumption that useful information is to be found in the heaps of data. 
However slim the ratio of useful to useless data may be, the exercise is considered worthwhile. 

As this is almost inevitably complicated, a new area of active research and a business sector were created -- those of ``data mining''. 
We will take a look at what is perhaps the most fundamental data mining problem of all: given a new piece of data, finding similar pieces of data in the pile you have accumulated already. 
This is the problem of similarity search. 
It should not be confused with exact search, the topic usually discussed in introductory computer science textbooks. 

The difference is that of finding a book in a library 20 years ago and today. 
In the past, knowing that you wanted to find a book having ``Pantagruel'' in its title meant somehow finding out, by perhaps talking to librarians (who one would assume read many books) that the author is one F. Rabelais. 
Then, you would walk down the ``fiction'' section of authors starting with R, and hopefully within a few minutes locate the title. 
Today, sitting at home and without relying on knowledgeable librarians you would start reading the first page of the book within seconds of searching for ``Pantagruel'' in a Web browser. 
This miracle is due in no small part to clever solutions to the similarity search problem called indexes. 
Indexes are structures that organize a database in such a way that fast similarity search is possible. 

Is the similarity search problem solved then? 
Let us consider a slightly different problem: we are given a photograph of a mountain landscape with no clear giveaways as to the location and yet we would like to know where it was taken. 
A person who knows those mountains will immediately tell you the approximate location, as she is able to identify the particular vegetation, rock formations and glaciers. 
But finding such a person is considerably tougher than going to your local library. 
It would be most helpful if one could submit this photograph to a search engine containing millions of tagged pictures (the Web?) to find the most similar ones. 
This way our untagged picture will obtain a tag: namely geographic information. 
That no such solution exists (yet) is a testament to the difficulty of search in high dimensions. 
Untagged pictures are composed of thousands if not millions of coloured pixels and it is not immediately obvious how to teach a computer to quickly find similar ones. 

This has come to be known as the ``curse of dimensionality'' and is the primary topic of this work. 
We aim to provide an asymptotic analysis of a class of indexes applied to high dimensional datasets of ``typical'' behaviour. 
The broad conclusion is that high dimension leads almost inevitably to unacceptably slow performance of search, akin to waiting 3 days for the web browser to tell you about {\em Gargantua and Pantagruel} (enough to make the local library suddenly competitive).

In Chapter 1, we introduce the formal mathematical setting of this search problem by defining a space of all queries and datapoints. 
A distinction is made between the database, a finite collection of objects and the potentially infinite larger space that acts as the source of ``new'' objects -- the centres of queries. 
Similarity queries are based on objects from the larger space but all that is known a priori is the database which must serve to infer patterns in the larger space. 
We provide a cost model for pivot-based indexes and setup the relationships between all the relevant variables for asymptotic analysis: primarily the database size, index size and dimension.

In Chapter 2, a related phenomenon from asymptotic geometric analysis is presented: the concentration of measure. 
It is the somewhat counterintuitive observation that high-dimensional objects appear to be small when we base our measurements on samples. 
For example, when a sphere is sampled the so-called observable diameter as a function of dimension tends to 0 very quickly. 

We present various families of spaces, which we call L\'evy families, that exhibit the more particular normal concentration of measure. 
These families are used as examples of spaces that are interesting to index yet exhibit geometric properties that make indexing hard. 
We show how these geometric properties of the larger spaces imply something reminiscent of the curse of dimensionality for pivot-based indexes. 

In Chapter 3, we introduce Statistical Learning theory as a tool to connect concentration of measure on spaces to finite datasets, which are treated as random samples from these spaces. 
Our interest is in a generalization of the theorem of Glivenko-Cantelli, due to Vapnik and Chernovenkis, about uniform convergence of sampled quantities to their true values. 
The crucial condition for the applicability of this theorem is that the class of balls in a space have a low ``capacity'': among the different capacity measures that can be used here is the VC dimension. 
If this condition is met we can make conclusions about the behaviour of any random dataset. 

The main theorem is presented in Chapter 4: that concentration of measure leads, under certain very natural conditions, to the curse of dimensionality for pivot indexes. 
That is, within our model we give a mathematically rigorous proof that asymptotically in dimension, all pivot-based indexes are not significantly better than a simple sequential scan of the database. 
We derive certain properties of the speed with which this degradation in performance takes place as well. 
The asymptotic bound is strong: the performance degrades quickly in dimension to the point that at least half of all possible queries will almost surely require a full scan of the database no matter which pivot-based is used and with which parameters (under some reasonable limits on space). 
Care is taken to present the full set of assumptions of this asymptotic analysis, as its conclusion may not hold in other cases, for example when the index is allowed a large amount of storage for pre-computation. 

Although the very results we prove demonstrate that indexing in high dimensions is often hard, in Chapter 5 we perform several experiments with pivot-based indexes on two different kinds of datasets. 
The broad conclusion is that no particular flavour of index does much better than the rest: performance quickly diminishes, which would leave some tough choices for database designers who deal with high-dimensional data. 
Perhaps a pertinent question is whether real-life datasets are ever truly high dimensional: it seems to be the opinion of some researchers that they almost never are. 
In this case the doom and gloom we prognosticate is mainly theoretical. 

\tableofcontents 

\chapter{Indexing of metric space databases for similarity search} 
A database is a collection of records with an added structure that allows the user to query, update, and delete records in a variety of ways. 
Databases are extremely pervasive in modern life: the contacts list on a phone, a bank's client records, the whole Internet, a grocery store's transactions\ldots 
In fact this general definition of a database is not restricted to computerized systems: the ancient library of Alexandria was a database as well. 

What computerized systems have allowed a veritable explosion in the size and number of databases. 
A parallel for processors is the famous Moore's law that hypothesized an exponential growth in the number of components on chips. 
The original prediction shown in Figure \ref{fig:moore} has been valid, up to a constant, for over 40 years. 
Although less well documented, it has been an accepted truth in the business community that the size of databases has been expanding exponentially as well: e.g. \cite{kimball} p.295 and \cite{hegland:1}. 

What seems to be happening is a sort of Moore's Law for the size of databases as they keep pace with the rise in processing power. 

\begin{figure*} 
\centering 
		\includegraphics[height=2in,width=3in]{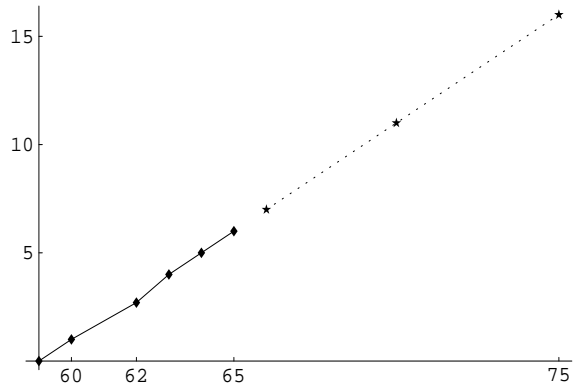} 
	\caption{\small The original Moore's Law. $\log_2$ of Components per chip vs Year} 
	\label{fig:moore} 
\end{figure*} 

At stake is the scalability of database systems, as superlinear algorithms for querying, updating and deleting are necessarily experiencing a continuous degradation of their performance. 
Thus finding more efficient algorithms is not becoming less important with rising computing speed as is sometimes suggested by non-specialists.  
Put another way, our expectations of increase in performance exceed even the astonishing pace of Moore's Law and so more clever algorithms need to be devised. 

To illustrate the main topic of this work, we will briefly summarize what searching a database typically means in real life. 
The most common databases today consist of several dozen (or more) big {\em tables}. 
An example of a 6-column database table is provided in (pardon the overloading of meaning) Table \ref{tabledatabase}. 
The two example rows are {\em records}, and typically millions of them would exist in a large company or government department. 
Customer ID is in database parlance a {\em primary key}: it is used by the system to uniquely identify the record. 

\begin{table*}[h] 
\small 
	\centering 
		\begin{tabular}{r r r r r r} 
		\textbf{Customer ID} & last name&first name&address&postal code&add date\\ 
			\hline	 
	12783 & Black &Conrad&50 Beechwood Dr.&M1F6H0&2008-05-01\\ 
25456 & Minkowski &Alice&204-190 5th Avenue &H8F1K4&2006-09-25 \\ 
		\end{tabular} 
		\label{tabledatabase} 
		\caption{a typical database table} 
\end{table*}

Perhaps the most well known function of a database is a search of its records. 
An example could be the retrieval of all customers who were entered into the database in May 2007. 
While it is possible to execute this request by looking sequentially at each record and seeing if the ``add date'' attribute matches, a more efficient solution is possible. 

Roughly speaking, in a database with $n$ records the sequential scan method proposes to look at $n$ records while the more efficient one will only do about $\log (n)$ lookups. One such method is the B-tree ( \cite{sedgewick} ), a structure that puts the values of the attribute to be searched in a tree of depth $\log(n)$ and hence exact search takes only about that many operations. 
In fact the time to perform search can be thought of as constant: even for $n=10^9$ a search query requires only 2 comparisons. \cite{sedgewick} 
 
This ``classical'' search problem is actually a very particular case of {\em similarity search}. 
To understand why other kinds of search are interesting, it is helpful to realize that a B-tree or similar structures only deal with one numeric attribute at a time. 
A structure is built on that attribute alone, which makes for efficient searches of a very particular kind: those based on that single attribute. 
Multiple B-trees would have to be built if search on other attributes is to be expected, and things get more complicated still with complex queries involving multiple attributes at the same time. 

When more dimensions are added to the mix, the problem not only gets more difficult, but no efficient solutions seem to exist. 
As multiple attributes at once are to be taken into account it is no longer possible to order all the records in a line. 
We can only say, given two records, how close they are to each other. 

A database of fingerprints, stored as an array of black and white pixels, can be used to illustrate similarity search. 
The typical query is to find all the similar fingerprints to a new sample from the street. 
This similarity is computed using some function of the pixels forming the image of the query and any candidate from the database. 
This is an expensive operation that we would like to minimize. 
In classical search we would have liked to assign a key to each of the set of fingerprints in the world. 
This way we could ask for ``fingerprint set 13747'' 

The major problem however is that the point precisely is that no two sets are exactly the same, even if taken from the same person. 
The first set will be 13747 while the second could be 78965, thus making this numbering scheme of little use -- unless we have some way of ensuring that the second set of fingerprints will be always assigned a number ``close'' to 13747. 
It is realistic to assume that police investigators are interested in the ``closest 50 matches'' which they then would like to review by hand.  
The database designer then has to provide this capability with just one piece of structural information: the function that measures the similarity between two sets of fingerprints. 

This problem therefore lends itself well to be characterized in terms of metric spaces, (alternative characterizations, without metric spaces, exist \cite{goyal:08}). 

\section{The search problem}\label{sec:search} 
Formally, the problem is framed in terms of a metric space: 

\begin{definition} 
A {\em metric space} is a set $\Omega$ equipped with function $\rho: \Omega\times\Omega\rightarrow\mathbb{R}$ s.t. 
\begin{itemize} 
\item $\rho(\omega_1, \omega_2)\geqslant 0$, $=0$ if and only if $\omega_1=\omega_2$ 
\item $\rho(\omega_1, \omega_2)=\rho(\omega_2, \omega_1)$ and 
\item $\rho(\omega_1, \omega_2)\leqslant \rho(\omega_1, \omega_3)+\rho(\omega_3, \omega_2)$ 
\end{itemize} 
\end{definition} 

The function $\rho$ is a {\em distance function} (also: {\em metric} ) and it is the main mathematical structure of $\Omega$. 
Its most important feature perhaps is the last property, known as the {\em triangle inequality}. 
Essentially it is the only tool available for inferring distances in a metric space. 

In addition the metric space is equipped with a probability measure, the definition of which we assume is known to the reader or can be looked up in a text like \cite{taylor}. 
It is a way of assigning a weight to ``measurable'' subsets in the space $\Omega$ to account for different likelihoods of a random point falling into them. 
At least in the case of a finite set it is simple enough: 
\begin{definition} 
For a finite set $X$ a {\em probability measure} on the set is a function $\mu: 2^X\rightarrow [0,1]$ s.t. 
$\sum_{x\in X}{\mu(\{x\})}=1$, so that $\forall A\subset X$, $\mu(A)=\sum_{x\in A}\mu(\{x\})$. 
\end{definition} 

What we call a {\em dataset} is a finite subset $X\subset\Omega$ equipped with the inherited metric $\rho|_X$ and normalized counting measure 
\[\mu_\# (A) = \frac{|A\cap X|}{|X|}\] 
(which is indeed a probability measure). 

There is good reason to distinguish between the space $\Omega$ and the dataset $X$ we have on hand. 
In the fingerprints database example, we cannot claim to have all known and future fingerprints already in the database. 
More likely is the opposite: every time a set of fingerprints come in for inspection, no exact matching set exists on file. 
So the new set must have come from outside $X$, yet $X$ is the space we are trying to search. 

What can $\Omega$ be in this case? All the possible fingerprint impressions, taken under all possible various conditions, of the entire human population? Past, present, future?  
The exact specification of $\Omega$ may be hard to formalize and is ultimately unimportant.  
Perhaps a parallel can be drawn with probability theory where $\Omega$ stands for sample space and $X$ for a random variable. 

Furthermore, the measure $\mu$ is as a consequence also hard to specify, but it seems reasonable to assume that in any case, some fingerprints are more likely than others to show up as queries to our database. 
Then all we can safely assume about $\mu$ is that it is some non-uniform measure, which is unknown to us. 

$X$ on the other hand is quite concrete -- we literally have a list of all the elements at our disposal. 
In addition to being a subset of $\Omega$ it is a metric subspace in the sense that the same metric $\rho$ is used to calculate distances between points of $X$. 
To simply inherit $\mu$ is problematic as we don't actually know what it is. 
Nevertheless we would like to do computations in $X$ taking probabilities into account. 
The solution is to use the counting measure, also known as the empirical measure. 
This may seem crude but it is actually a pretty good approximation of the ``real'' measure as a consequence of a well-known theorem in statistics (more on this later). 

Given this structure we can perform the following similarity queries: 
\newline Given the key $q\in\Omega$, 
\begin{itemize} 
\item Nearest neighbour: find the k closest elements in X to q 
\item Range: find all the elements in X within distance r from q 
\item Proportion (variation on the first): return k closest fraction of X to q 
\end{itemize} 
An example above was of a 50 nearest neighbour query. Supposing that such a similarity query on a fingerprints databases would rely on counting common ``features'', a range query can be ``find all records that have at least 10 features identical to this sample q''. 
What these two queries have in common is the underlying idea that they will return a small set of results that a human can then manually examine. 
Effectively the different kinds of similarity queries are closely related, and studying one sort is not a serious limitation (see \cite{chavez:1}). 

More specialized queries, e.g. finding {\em all} pairs of nearest neighbours, are approached differently both in the building and analysis of algorithms and so will not be covered. 
\section{Indexing for search} 
To answer a similarity query we can revert to the strategy of looking up every element in $x\in X$ and calculating $\rho(q,x)$. 
We will call this a linear scan, as it is clearly linear in the size of the dataset. 
However we will suppose (what is often the case) that the calculation of distances $\rho(q,x)$ is the most expensive operation \cite{chavez:1}.  

In such a setting the linear scan approach is very slow: if a single distance computation takes $1/100$th of a second, several weeks would be needed 
to traverse a database with 100 million records. Distance functions that take milliseconds and more to execute are mentioned for examples in \cite{ciaccia:98}. 
 
An {\em index} is an added structure to the database that facilitates operations like searching. 
Here we will restrict the use of the word to the scope of this work: 
\begin{definition} An index is a data structure whose aim is to speed up the execution of similarity queries on a particular dataset, typically consisting of some pre-calculated values and an algorithm. 
\end{definition} 
Let's quickly introduce a simple index, Orchard's algorithm \cite{clarkson:05}, as an example:  
\begin{example}\label{eg:orchard} 
For the $n$ points in $X$, we create an $n\times n$ matrix of distances $\rho(x,y)$: each row corresponding to all the distances to $x$, sorted in increasing order. 
To perform 1-nearest neighbour search query with centre $q$, we pick a random element $x$ and go throgh the row of distances corresponding to $x$. 
If we find $y$ s.t. $\rho(y,q)<\rho(x,q)$ we switch to the row of $y$. 

We avoid going through all the elements of $X$ by applying the following criterion: 
we stop going through the list of $y$ if the last seen element $z$ satisfies $\rho(z,y)>2\rho(y,q)$. This follows from 
\[\begin{array}{ll} 
&\rho(z,y)\leqslant\rho(y,q)+\rho(q,z)\\ 
\Rightarrow&\rho(z,q)\geqslant\rho(z,y)-\rho(y,q)\\ 
\Rightarrow&\rho(z,q)>1/2\rho(z,y)>\rho(y,q)\\ 
\end{array} 
\] 
The assumption having been applied twice in the last part. 
If we can't find $z$ such that $\rho(z,y)>2\rho(y,q)$, $y$ is returned as the answer. 
We can further improve the search by applying various strategies to avoid unnecessary lookups. 
\end{example} 

Practically speaking the reason to have an index is that the several weeks taken by a na\"\i ve approach can be reduced to a few hours, maybe even only a few minutes. 
The consequences cannot be underestimated, for, as amazing is our ability to amass data the ``killer app'' is search: a collection of 100 million sets of fingerprints is interesting, but a system capable of handling thousands of queries per day is downright useful. 

Dozens of ideas of how to build such an index were presented and new ones, having various advantages and analyzed in different ways, are invented continuously. 
Attempts at categorizing and providing a unified framework are recent: a book on the subject appeared in 2005 \cite{zezula} and the first international conference on similarity search \cite{sisap} was held in 2008. 

Several high-level views exists as to how mathematically describe the function of an index in performing similarity search. 

One is outlined in \cite{chavez:1}: an indexing scheme works by partitioning the space $\Omega$ into regions. 
In other words the space is decomposed as $\Omega=\cup_i{R_i}$ where the $R_i$ are finitely many, pairwise disjoint subsets of the space. 
A query algorithm takes advantage of this partitioning as follows. 
Through the use of some inequalities -- the triangle inequality in some form -- we {\em discard} some of the regions as it can be proved that no elements of $X$ lying in those regions can possibly be in the query results.  
The elements of $X$ in the undiscarded regions have their distances $\rho(q,x)$ computed through a linear scan to determine if they should be returned. 
As mentioned above a linear scan is nothing but a sequential lookup of each of the region's elements, so the key is to eliminate as many regions as possible. 
A high level pseudocode description is given in Algorithm \ref{alg:general}. 

\begin{algorithm*}[h] 
\SetLine 
\KwData{query, index} 
\KwResult{queryResults} 
\For{each region in index.Regions}{ 
\eIf{(NOT index.IsExcluded(region , query))}{ 
append LinearSearch(query, region) to queryResults\; 
}{} 
} 
return queryResults\; 
\caption{Use of an index for query execution: regions} 
\label{alg:general} 
\end{algorithm*} 
The triangle inequality and some variations thereof, as used in indexing to avoid sequential scan are listed in \cite{zezula}: 
\begin{itemize} 
\item Double sided triangle inequality : 
\[|\rho(\omega_1,\omega_3)-\rho(\omega_2,\omega_3)|\leqslant\rho(\omega_1,\omega_2)\leqslant\rho(\omega_1,\omega_3)+\rho(\omega_3,\omega_2)\] 
in other words, knowing distances to a third point $\omega_3$ gives us constraints on $\rho(\omega_1,\omega_2)$  
\item We only know that $r_l\leqslant\rho(\omega_1,\omega_3)\leqslant r_h$ while $\rho(\omega_2,\omega_3)$ is still known exactly. Then: 
\[\max \{\rho(\omega_2,\omega_3)-r_h, r_l-\rho(\omega_2,\omega_3),0\}\leqslant\rho(\omega_1,\omega_2)\leqslant\rho(\omega_2,\omega_3)+r_h\] 
\item Only a range is known: 
both $r_l\leqslant\rho(\omega_1,\omega_3)\leqslant r_h$  
and $r^{'}_{l}\leqslant\rho(\omega_2,\omega_3)\leqslant r^{'}_{h}$ Then 
\[\max \{r^{'}_{l}-r_h, r_l-r^{'}_{h},0\}\leqslant\rho(\omega_1,\omega_2)\leqslant r_h+r^{'}_{h}\] 
\end{itemize} 

Given these simple tools the diversity of the various indexing schemes is astonishing. 
There are tree and flat structures, structures that try to optimize inserts and deletes, trees that are deep or shallow, balanced or not, with claimed 
computational complexities from constant to exponential in $n =$ size of $X$. 
Furthermore because search is a topic of research in multiple disciplines, including for example pattern recognition \cite{devroye} where a ``nearest neighbour'' algorithm is almost equivalent to nearest neighbour search, the same solutions have been reinvented multiple times and called different names \cite{chavez:1}, \cite{clarkson:05}. 

Another reason for this multiplicity of algorithms has to do with the fact that no ``best'' algorithm has been found yet and so a number of solutions offering specific space-time tradeoffs or other features have been developed. 

It is important to note that in some situations indexing is not possible. 
Some metric spaces are so general that the distance function does not provide any usable information. 
We will first consider a trivial example, that of the 0-1 distance \cite{clarkson:05}. 
\begin{example} 
Suppose we have $(\Omega, \rho, \mu )$ such that 
\begin{equation*} 
\rho(\omega_1,\omega_2)=\left\{\begin{array}{rl} 
\ 0 & \text{if } \omega_1 =\omega_2\\ 
\ 1 & \text{if } \omega_1 \neq\omega_2\\ 
\end{array} \right. 
\end{equation*} 
Then, all query types reduce to finding an exact match. 
Furthermore no amount of storage of the distance functions between various elements of $X$ can facilitate a search query as no information can be added. 
So no {\em metric based} index works. 
\end{example} 

Of course this particular example may not seem very convincing as we already mentioned the existence of efficient solutions for exact match. 
Even in the absence of keys, we could use a hash function \cite{sedgewick} to reduce the problem to a quick binary search. 

Another, more substantive example is given by \cite{lifshits:07}. 
\begin{example} 
In this space, based on a graph generated from web mining \cite{blockeel}, the distance function is such that for any two  distinct elements $\omega_1$, $\omega_2$: 
\[1/2\leqslant\rho(\omega_1,\omega_2)\leqslant 1\] 
Then in this case inferring $\rho(\omega_1,\omega_2)$ based on knowledge of distances to a third point $\omega_3$ is not informative. 
We can verify that all the commonly used variations of the triangle inequality listed above reduce to something we already know, namely that distances lie between $1/2$ and 1. 
\end{example} 

Spaces that are impossible or at least very hard to index are by no means rare -- their high incidence is a whole subject of study. 
\section{The curse of dimensionality} 
\label{sec:curse} 
An often repeated observation is the inability of algorithms to deal with high dimensional datasets (e.g. \cite{beyer})-- a phenomenon described as the {\em curse of dimensionality}. 
Simply put, when algorithms are run on Euclidean datasets of increasing dimension, performance drops exponentially as a function of dimension. 

The concept of dimension in a general metric space is less precise. 
Clearly it has to obey our intuition in Euclidean space so for example a plane in the 10-dimensional space $\mathbb{R}^d$ is still 2-dimensional, and 
it would be desirable for a uniformly distributed ball in $\mathbb{R}^d$ to be $d$-dimensional. 
But what to do about datasets like sets of fingerprints, or movies, where an intuitive notion of dimension is harder to develop? 

One approach is to focus on the metric space properties of $\Omega$, and take advantage of already known concepts for metric spaces, such as packing numbers and $\epsilon$-nets. 
\begin{definition} 
An $\epsilon$-net $C$ of $(\Omega, \rho)$ is a set of points from $\Omega$ satisfying 
\[\cup_{c\in C}B(c,\epsilon)=\Omega\] 
where $B(c,\epsilon)=\{\omega\in\Omega | \rho(\omega, c)\leqslant\epsilon\}$ 
\end{definition} 
The size of the {\em minimal} $\epsilon$-net is called the covering number of $(\Omega, \rho)$ and denoted $\mathcal{C}(\Omega,\epsilon)$ as it is a function of $\epsilon$. 

It is not hard to convince oneself that at least in Euclidean spaces higher dimension leads to higher covering numbers. 
For the unit cube in $\mathbb{R}^d$ the covering number is on the order of $1/{\epsilon^d}$ \cite{clarkson:05}. 
This concept is extended \cite{clarkson:05}, to define the {\em Assouad dimension} (see algorithmic complexity notation later in Table \ref{tablealgorithmic}): 
\begin{definition} 
The Assouad dimension of $(\Omega, \rho)$ is the number $d$ satisfying: 
\[\sup_{\omega\in\Omega, r>0}{\mathcal{C}(B(\omega,r),\epsilon r)}=1/{\epsilon^{d+o(1)}}\] 
\end{definition} 
A small Assouad dimension is a very strong requirement: all the balls in the space have to be well behaved in the sense that they admit small covers. 
Perhaps unsurprisingly results exist that show feasibility of indexing for similarity search in case of small Assouad dimension \cite{clarkson:05}. 
However the curse of dimensionality stands: these algorithms have an exponential dependence on the Assouad dimension. 
This concept is also too complicated to compute for real datasets where $\Omega$ is unknown: it is not clear how to estimate it from $X$ alone. 

As the number of proposed concepts of ``intrinsic dimension'' for the purposes of similarity search is growing, \cite{pestov:07} outlines desirable properties we should look for. 
Included are ease of computation and definition for discrete spaces in such a way that the intrinsic dimension of $X$ is closely related to the intrinsic dimension of $\Omega$. 

An easy to compute and reasonable proposal for a dimension is mentioned in \cite{chavez:2}: 
\[\tilde{d} = \frac{\mathrm{E}(\rho(X,Y))^2}{2\mathrm{Var}(\rho(X,Y))}\] 
where $x,y\sim \mu$, the distribution of points in $\Omega$. 
Using relatively small samples, the empirical intrinsic dimension as computed for various objects in table \ref{tabledimensions} seems to give reasonable answers.  

\begin{table*}[h] 
	\centering 
		\begin{tabular}{|l |c|} 
		\hline 
		Space & calculated dimension\\ 
			\hline	 
	20-dimensional uniform unit cube & 27.6\\ 
	the NASA dataset \cite{dimacs}& 3.9\\ 
	20-dimensional uniform sphere & 20.8\\ 
	100-dimensional uniform sphere & 139.5\\ 
		\hline 
		\end{tabular} 
		\caption{Some empirical approximations of Ch\'avez intrinsic dimension } 
		\label{tabledimensions} 
\end{table*} 

This concept is based on looking at the histogram of distances 
\[\{d(q,x)\vert x\in X\}\] 
Given a query centre $q$, if the histogram of distances from $q$ to points in $X$ shows a lot of ``concentration'', this will be a hard query to process as most points will need to be checked. 
By concentrated we mean of low variance, while the mean of the distances is in the numerator to account for different scales. 
Explained another way, we would like a uniformly distributed unit cube in $\mathbb{R}^d$ have the same dimension as a uniformly distributed cube twice the size in the same space, so there is a need to normalize. 
This will often be a non-issue as we would normalize the distance function so that $\mathrm{E}(\rho(x,y))\sim 1$. 

The intrinsic dimension of \cite{chavez:2} then is an average measure of all the possible histograms taken from ``viewpoints" $q$. 
In reality both $\mu$ and $\Omega$ are either unknown or unworkable so this measure is to be estimated from $X$. 
Underlying is the assumption that datasets exhibit a certain amount of homogeneity of viewpoints, that is histograms taken from different $q$ will still look similar -- a hypothesis with some experimental validation \cite{ciaccia:98}. 
 
For ``truly'' $d$-dimensional structures in Euclidean space, e.g. uniformly distributed unit cube, this $\tilde{d}$ corresponds to $d$ (asymptotically). 
 
Using this intrinsic dimension it is possible to derive a lower bound on the number of distance computations required as a function of $\tilde{d}$ \cite{chavez:2}. 
This bound however is not very strong -- on the order of $\tilde{d}\, \ln (n)$.  
Thus only for large $\tilde{d}$ relative to $n$ is this lower bound significant. 
This leads to our next topic: how big is dimension allowed to get, in relation to $n$? 

Henceforth we will use $d$ somewhat ambiguously, referring to either the usual notion from vector space theory or one of the intrinsic dimension concepts with the understanding that they are all coincide, at least asymptotically. 
 
As mentioned in \cite{chavez:1}, for a fixed dimension $d$ and fixed $n$, we can find indexing schemes that are fast. 
The problem that we would like to analyze has to do with scaling these algorithms as {\em both} $d$ and $n$ grow. 
This requires us to make precise how fast we let these two quantities grow in relation to each other. 
This invariably leads to the use of algorithmic complexity notation, a summary of which appears in table \ref{tablealgorithmic}. 

\begin{table*}[h] 
	\centering 
		\begin{tabular}{|r |l|} 
		\hline 
		notation & definition\\ 
			\hline	 
	$f(t)=O(g(t))$ & for some $C>0$, eventually $f(t)\leq Cg(t)$\\ 
	$f(t)=o(g(t))$ & for {\em any } $C>0$, eventually $f(t)\leq Cg(t)$\\ 
	$f(t)=\Theta(g(t))$ & for some $C_1,C_2>0$, eventually $C_1g(t)\leq f(t)\leq C_2g(t)$\\ 
	$f(t)=\omega(g(t))$ & for {\em any } $C>0$, eventually $f(t)\geq Cg(t)$\\ 
		\hline 
		\end{tabular} 
		\caption{Algorithmic complexity notation} 
		\label{tablealgorithmic} 
\end{table*} 
  
An asymptotic analysis will therefore involve both: 
\[d\rightarrow\infty\] and 
\[n\rightarrow\infty\] 

We will try to argue for what the relation between $d$ and $n$ should be by going back to the more fundamental question of what is an efficient index. 
It is clear that we should be able to perform similarity queries with less time than that taken by a linear scan. 
In the language of algorithmic complexity, we require a sublinear complexity in $n$, that is 
\[\text{querytime }=o(n)\] 
where by querytime we mean the average time it takes for a similarity query to execute, time measured in distance computations. 
The average here is computed over a reasonable space of possible queries, on which we will touch later. 

Storage is also important, with at most polynomial storage allowed (but in practice even $n^2$ may be too much): 
\[\text{storage }=n^{O(1)}\] 
For our particular indexing scheme the storage is measured by the number of distances {\em stored}.
We do not make a distinction among the different ways to store a real number, in all cases it is considered as 1 unit (of cost). 
This covers a large number of indexing schemes that are essentially arrays of pre-computed distances. 

As our main concern is with asymptotic analysis it is also to specify bounds on $d$. 
We will follow an approach in the authoritative survey by \cite{indyk:1} and focus on a particular range for $d$: superlogarithmic but subpolynomial in $n$.  
Expressed using the notation,  
\begin{equation}d=\omega(\log n)\label{eq:lbN}\end{equation} 
\begin{equation}d=n^{o(1)}\end{equation} 
The reason for the lower bound \eqref{eq:lbN} is due to a case study which requires the definition of Hamming cubes. 
\begin{definition}[The Hamming Cubes $\Sigma ^d$] 
The Hamming cube of dimension $d$ is defined as the set of all binary sequences of length d, that is its elements are of the form 
\[\boldsymbol{x}=(0,1,1,0,1,\ldots,1)\] 
and the distance between two strings is just the number of elements they don't have in common divided by d: 
\[\rho(\boldsymbol{x},\boldsymbol{y})=\frac{\sum_{i=1}^{d}{|x_i-y_i|}}{d}\] 
This metric is known as the normalized Hamming distance. 
We will give the cube a uniform measure for this discussion.  
\end{definition} 

It turns out that for at least this case, when $d$ grows slowly, say $d=O(\log n)$ the entire space $\Omega$ is so small relative to the size of the dataset that all possible queries can be pre-computed and stored without breaking the polynomial storage requirement. 
The size of $\Omega=\Sigma ^d$ is just $2^d$ which becomes on the order of $n$ for sublogarithmic $d$. 
As there are on the order of $n$ possible radii, there are only $n^2$ possible queries which can be all precomputed. 
So to build a general framework for asymptotic bounds it seems necessary that $d$ grow strictly faster than $\log n$ 

As we consider algorithms that are exponential in $d$ to suffer from the curse of dimensionality, we will require querytime polynomial in $d$ (\cite{indyk:1}): 
\[\text{querytime }=d^{O(1)}\] 

This upper bound on $d$ results from the observation that if $d$ grew so fast that $n=d^{O(1)}$ a sequential scan would be polynomial in $d$. 
As nothing needs to be proven in that case, we focus on when $d$ is subpolynomial in $n$ and require an algorithm polynomial in $d$ and hence subpolynomial in $n$. 


We will adopt the view that these bounds on $d$ are a reasonable setting for the investigation of performance of various index based query algorithms. While $d$ grows fast enough to not render the problem trivial, we disregard high rates of growth for which proven examples of the ``curse'' already exist. 

Summarizing: 

\begin{table*}[h] 
	\centering 
	The goal of finding a scalable index is to find polynomial (preferably degree less than 2) $n$ storage algorithm that allows search in polynomial $d$ time.  
\end{table*} 

This stands in contrast to the {\bfseries curse of dimensionality conjecture}, whose form we borrow from \cite{indyk:1}: 

{\sffamily 
If $d=\omega(\log n)$ and $d=n^{o(1)}$, any sequence of indexes built on a sequence of datasets $X_d\subset\Sigma_d$ allowing exact nearest neighbour search in time polynomial in $d$ must use $n^{\omega(1)}$ space. 
} 

At the moment of writing a proof of above has not been found.  
We provide it here for pivot-based indexes. 
 
\section{Pivot-based indexing} 
Pivot-based indexing algorithms (for example AESA, MVPT, BKT,...see \cite{chavez:1} and \cite{zezula}) rely on a selection 
of elements from $X$ that are used as proxies for the rest of the dataset. That is, distances from all elements of $X$ to the pivot elements 
are computed and then used to cut down computations through the triangle inequality: 
 
Given pivot set $\{p_1\ldots p_k\}$, we compute the $n\times k$ array of distances  
\[\rho(x,p_i)\text{, }1\leqslant i\leqslant k\text{, }x\in X\] 
This array serves as the index.  
 
 Given a range query with radius $r$ and centre $q$, the k distances $\rho(q,p_1)\ldots \rho(q,p_k)$ are computed so that $\rho(q,x)$ can be lower-bounded as follows: 
\[\rho (q,x)\geqslant|\rho (q,p_i)-\rho (x,p_i)|\] 
since this happens for any i, we can establish: 
\[ \rho (q,x)\geqslant\sup_{1\leqslant i\leqslant k}|\rho (q,p_i)-\rho (x,p_i)| \] 
It is useful to think of a new distance function, based on the $k$ pivots: 
\[\rho_k (q,x):= \sup_{1\leqslant i\leqslant k}|\rho (q,p_i)-\rho (x,p_i)|\] 

The fact $\rho (q,x)\geqslant \rho_k (q,x)$ can be used as a condition to discard all $x$ satisfying: 
\[ \rho_k (q,x) > r \] 
Therefore the algorithm consists of checking this condition, and if it is not satisfied, performing (the expensive) distance calculation to verify if   
\[ \rho (q,x) > r \] 
Only if it is again not true do we know that the point should be returned in the query. 
This process is described in Algorithm \ref{alg:pivots}: we call the new distance function $\rho_k$ as $index.distanceK$ to emphasize that is a function belonging to the index. 

\begin{algorithm*}[h] 
\SetLine 
\KwData{query, index} 
\KwResult{queryResults} 
\For{each point in dataSet}{ 
\eIf{index.distanceK(point , query.center) $<$ query.radius}{ 
\eIf{ distance(point , query.center) $<$ query.radius}{ 
append point to queryResults 
}{} 
}{} 
} 
return queryResults\; 
\caption{Querying a pivot-based index} 
\label{alg:pivots} 
\end{algorithm*} 

We will focus on range queries with pivot-based algorithms chiefly because they are easier to execute. 
At least in theory k-nearest neighbour queries can always be simulated by a range query with the radius set to the distance to the $k$th neighbour \cite{zezula}. 

As the iteration Algorithm \ref{alg:pivots} is happening on all the points of the dataset it may appear as this algorithm does not fit the framework of ``regions''. 
But it can always be considerate a degenerate case where the regions consist of singletons of points in the dataset plus the rest: 
\[\Omega=\left(\cup_{x\in X}{\{x\}}\right)\cup\left(\Omega\backslash X\right)\] 

This differs at least on a theoretical level from the decomposition presented in \cite{chavez:1} where an equivalence relation on the points of $\Omega$ is proposed: 
\[\omega_1\sim\omega_2\iff \forall 1\leqslant i\leqslant k, \rho(\omega_1,p_i)=\rho(\omega_2,p_i)\] 
This equivalence relation is then made to induce a partition of the space. 
In Euclidian space these partitions are intersections of spheres, which for even small $k$ will be single points. 

Perhaps a more useful characterization also presented in \cite{chavez:1} is to think of the pivot based indexing as sending $\Omega$ to a different space and then performing a range similarity search in the new space: 
\[(\Omega,\rho)\longrightarrow (l^{\infty}(k),l^{\infty} \text{-norm}): \omega\longmapsto (\rho(\omega,p_i))^{k}_{i=1} )\] 
The new space consists of sequences of reals of length $k$, with the max-distance also known as the $l^{\infty}$-norm. 
This is akin to our musing at the beginning of the chapter where we admitted that having a function that sends every set of fingerprints to a number could be useful if the function had properties that allowed us to avoid a sequential scan of the {\em original space}. 

As our cost model only counts distance computations in the original space, a range search in $l^{\infty}$ is considered free. 
That is our results stand even under the generous assumption that it takes 0 time to perform a search in $l^{\infty}$. 

We will denote by $C$ all the points of $X$ satisfying  
\[ \rho_k (q,x) > r \] 
that is all the discarded elements. 
Making $C$ large is the primary way of cutting the cost of search in the setting of distance computations as dominating cost.  
Of course we can achieve this trivially with a very large number of pivots. This will defeat the purpose however as 
\[\text{Cost of range search }= k+|X \backslash C|\] 

The most often used solution is to keep adding pivots as long as it is found experimentally to decrease the cost of search. 
If $k$ is small, on the order of $\log n$ (as often space limitations require), 
the most important component of cost becomes the size of $C$ and this is where the choice of pivots would seem to matter. 
Various approaches to pivot selection have been investigated in \cite{bustos}. 
The empirical results seem to suggest that a moderate reduction in the number of distance  
computations can be achieved, although the relative improvement drops with increasing dimension.  

There are numerous refinements on this basic approach to pivot-based indexes, but the underlying idea of using the triangle inequality together with the pre-computed distances is the same. Moreover \cite{chavez:1} argues that pivot-based indexes are one of only two types of metric space indexing algorithms, the other type being also closely related. Therefore investigating this barebones pivot index can be thought of as representative of a large number of actual implementations with the unecessary complications removed. 

To recap, we are hoping that a judicious choice of the pivots (in particular their number $k$) will result in an {\em average} $C$ that is big, preferably on the order of 99\% of $n$ (the size of $X$). 
Better yet would be to guarantee that $X \backslash C$ is no more than some fixed number, say 1000, irrespective of size of $n$. 
This way only the remaining elements will have to be totally searched -- which will produce an efficient algorithm as long as we keep $k$ reasonably small. 

In situations involving the concentration of measure phenomenon this scenario cannot happen. In fact we will show that the exact opposite takes place. 
The set $C$ will almost certainly be small, and most of the dataset will have to be exhaustively searched. 
\section{Approximate Search} 
A related problem to similarity search is approximate similarity search \cite{zezula}. 
The approximate version of say nearest neighbour search only requires that the element returned be within $1+\epsilon$ distance of the ``true'' result: 

\begin{definition}{Approximate Nearest Neighbour Search.} 
Fix $\epsilon$, $\eta >0$ Let 
\[\rho_{\text{NN}}(q)= \inf \left\{\rho(q,x)|x\in X, x\neq q\right\}\] 
represent the distance from $q$ to its nearest neighbour in $X$. 
Then an approximate nearest neighbour of $q$ in $X$ is any element $\tilde{x}$ satisfying 
\[\rho(q,\tilde{x})\leqslant (1+\epsilon) \rho_{\text{NN}}(q)\] 
with probability at least $1-\eta$. An aproximate nearest neighbour search query asks for any such $\tilde{x}$, with apriori set confidence factor $\eta$. 
\end{definition} 

There are some indications that approximate search is more efficient \cite{zezula},. 
However as pointed out in (e.g. \cite{lifshits:07}, \cite{beyer}, \cite{pestov:00}),  due to the concentration that many spaces exhibit, almost all points in a typical high dimensional dataset lie at about the same distance to $q$ so approximate search is of limited usefulness. 
We will look close at concentration in the next chapter. 
\chapter{The concentration of measure phenomenon}\label{ch:concentration} 
High dimensional spaces pose a problem for algorithms:  
dimensionality appears to affect \cite{donoho} whole classes of algorithms in optimization, numerical integration and database search. 
Cost estimates (running time) of solutions depend on dimension exponentially -- something that has come to be known as the ``curse of dimensionality''. 
This term, although now liberally applied to any problem that seems to depend exponentially on dimension, was first used in \cite{bellman} to describe the unit hypercube $\mathbb{I}^d$ in $d$ dimensions. 

This space exhibits in a certain sense a growing sparseness. 
That is if we were to take a ``small hypercube'' neighborhood around a point expecting to capture a proportion $r$ of the space $\mathbb{I}^d$, the side-lengths of the neighborhood will have to be $r^{1/d}$ (\cite{friedman:1}). 
For a given proportion, say $1\%$, this means $r=0.1$ when $d=2$ yet it grows to $r=0.79$ when $d=20$. 
Meanwhile the side lengths of the whole space remain one. 

Another way of looking at this effect is to compare the volume of the unit ball and the unit hypercube. 
The volume of the unit hypercube is clearly 1, while the ball's is 
\[\sim \frac{ (2\pi)^{d/2}}{d!}\] 
a value that goes to 0 with increasing dimension. 
This observation is used to argue that most of the points in the hypercube lie ``near the edges''. 

\begin{figure} 
\centering 
	\includegraphics[height=2.5in,width=5in]{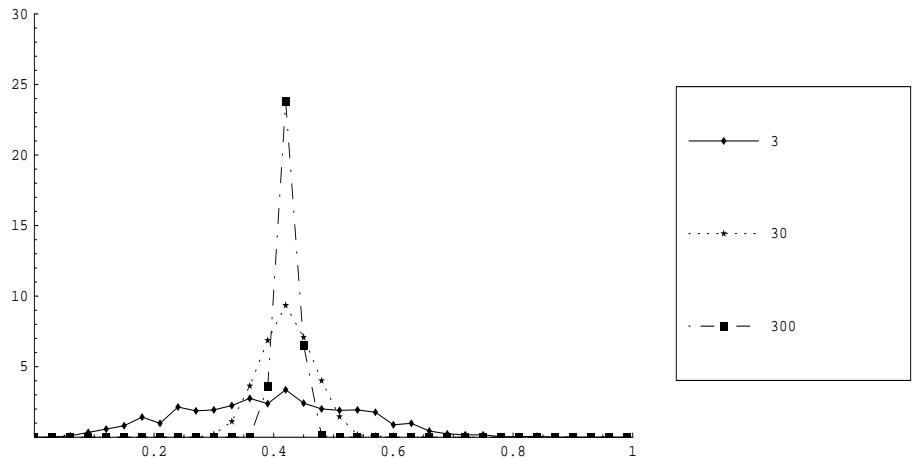} 
	\caption{Simulation of Normalized Distance between two points in the hypercube for $d=$ 3, 30, 300 } 
	\label{fig:avgdist} 
\end{figure} 

Yet another approach is to spread points uniformly and ask what is the average distance between any given two. 
In the case of the hypercube it seems that no closed form expression for this number exists, but an approximate expression is $\sqrt{d/3}$ (\cite{anderssen}). 
As there is nothing special about the centre except that it is somewhat closer to the ``average'' point, this shows that the mean distance between any two points grows at a rate of about $\sqrt{d}$ as a function of dimension. 
This is a heuristic argument often quoted for the hardness of function approximation in high dimensions (\cite{friedman:1}). 

It is legitimate to observe that the diameter of the hypercube is exactly $\sqrt{d}$ and thus a question that comes naturally to mind is why should this be called a curse of {\em dimensionality} when perhaps a more appropriate description is ``curse of {\em hugeness}''. 
After all a rectangle with side lengths $k$ and $1/k$ exhibits a similar behaviour: the average distance between two points is about $k/3$ \cite{dunbar} which will go to infinity with $k$. 
Yet it remains a low dimensional object that does not pose a problem for the aforementioned algorithms. 

So is there some effect specific to high dimension? A simulation approach is to generate (pseudo) random points on cubes of various dimensions and take the resulting histograms as approximate probability densities. 
The result presented in Figure \ref{fig:avgdist} is a more nuanced view of the distribution of distances in high dimensions: the histograms plotted show the distribution of distances for various dimensions $d$, normalized by $\sqrt{d}$. 
The average normalized distance tends to a constant, but it appears that the distribution is more {\em concentrated} with increasing dimension. 
A plot of the empirical standard deviation \cite{sheskin} shows a decrease with dimension. 
\begin{table*}[h] 
	\centering 
		\begin{tabular}{r r} 
		d & standard deviation\\ 
			\hline	 
	3 & 0.14\\ 
30 & 0.04  \\ 
300 & 0.014\\ 
3000 & 0.004\\	 
		\end{tabular} 
		\label{tablekurtosis} 
\end{table*} 

This illustrates what is sometimes called a ``benefit of dimensionality'' \cite{donoho}, namely the concentration of measure phenomenon. 

It is a well-studied topic of geometric analysis and is a much more powerful statement than the one we have made about variances. 
As a preview we shall take a look at one more set of pictures -- of the high dimensional unit sphere. 
In addition to being an appealing object it is naturally normalized with respect to distance: the maximal distance between two points remains 2 no matter the dimension. 

In order to draw a (2-D) picture of an object we have to find a way of projecting it onto flat space. 
An orthodox choice would be to take any two coordinates, say the first couple, and plot the resulting figure. 
Doing it for several different pairs will give us different views, and thus maybe the whole object will be known. 
When the $d$-dimensional sphere is sampled according to the uniform measure, and projected onto the plane by say taking the first two coordinates the result is rather peculiar: most points concentrate near the centre of the image. A simulation for various values of $d$ is provided in Figure \ref{figure:spheres}. 

\begin{figure*} 
\centering 
		\includegraphics[height=2in,width=2in]{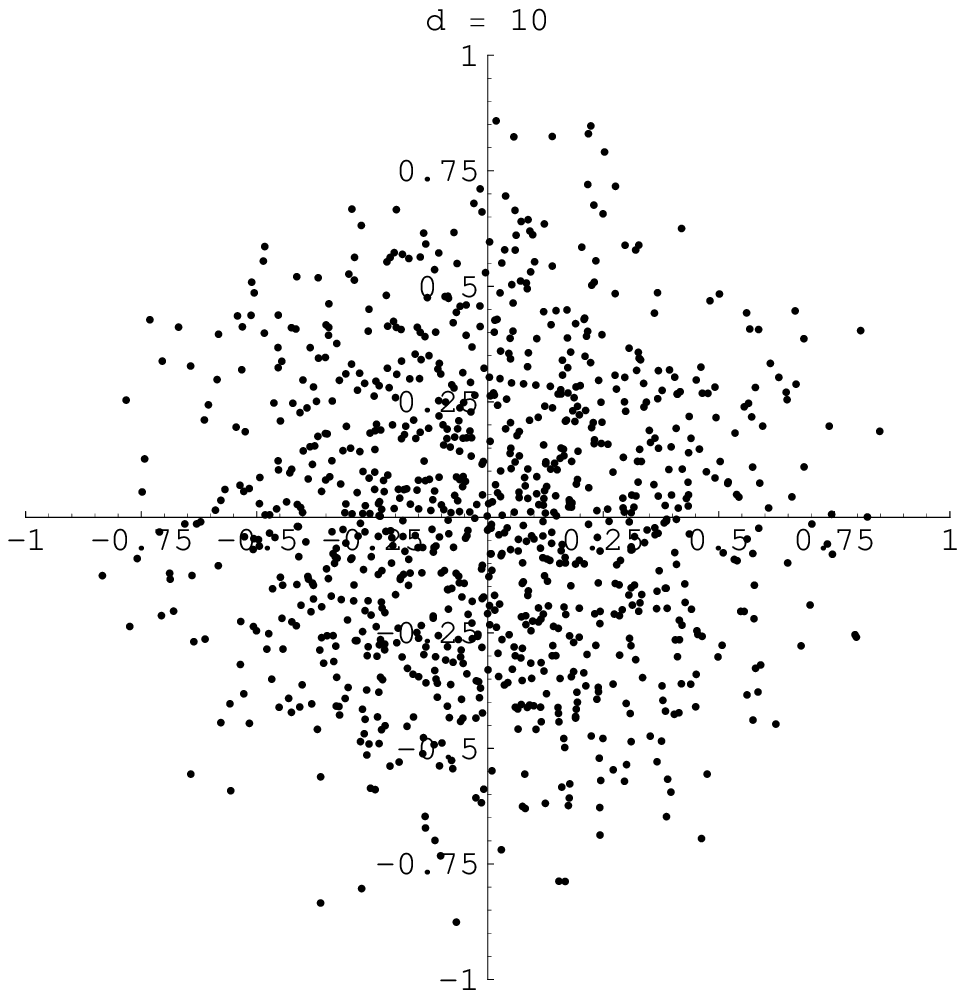} 
		\hfill 
		\includegraphics[height=2in,width=2in]{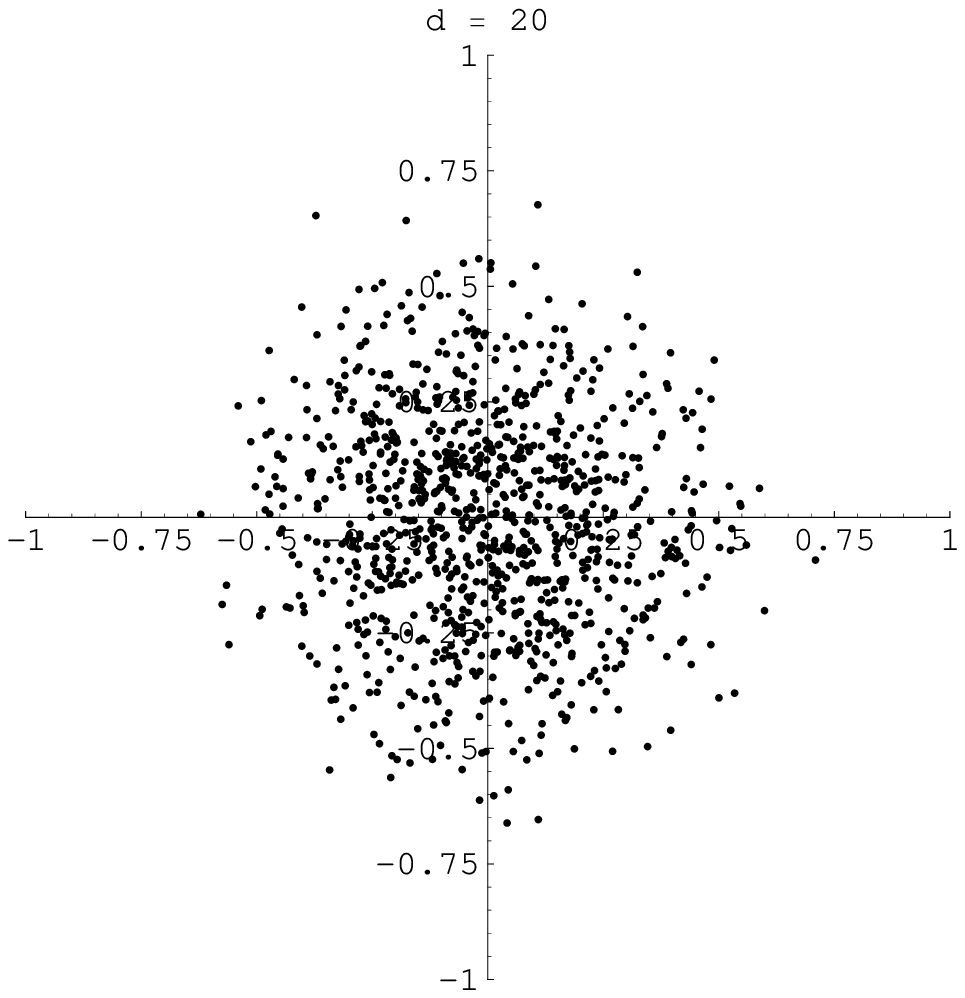} 
		\hfill 
		 
		\vspace{2ex} 
		 
		\includegraphics[height=2in,width=2in]{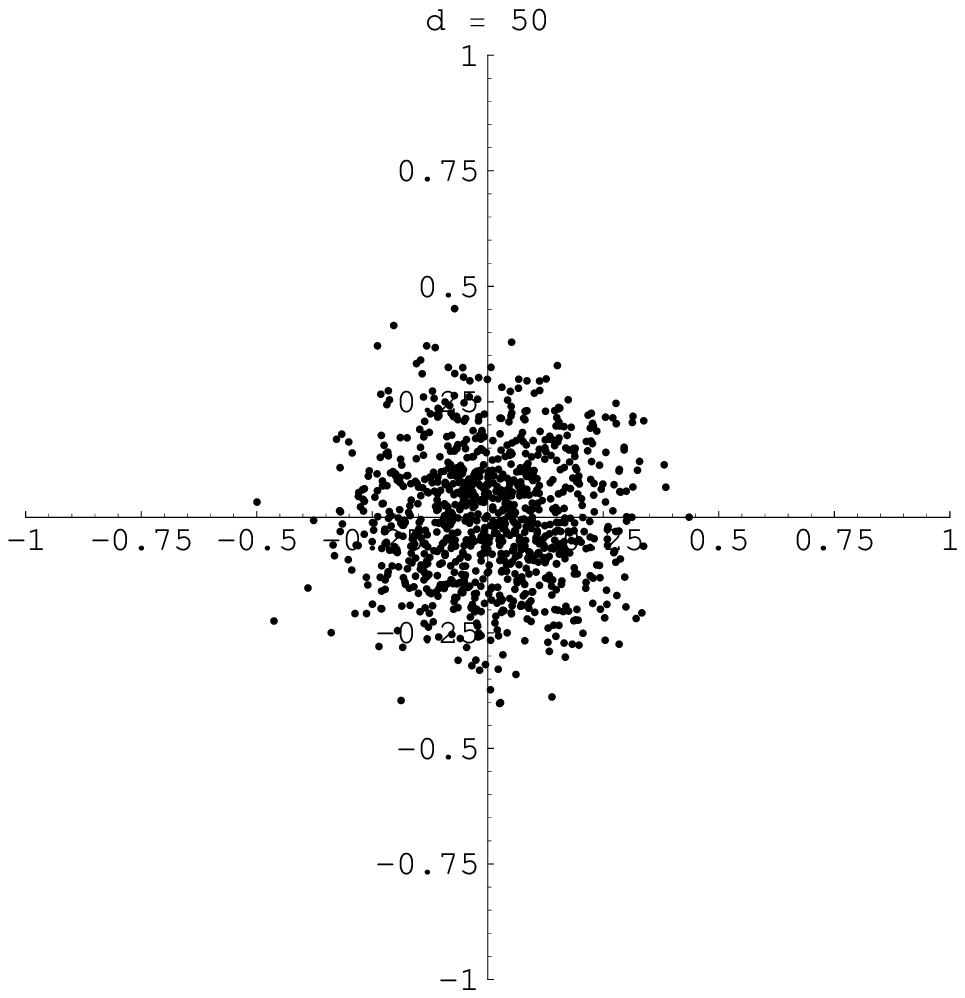} 
		\hfill 
		\includegraphics[height=2in,width=2in]{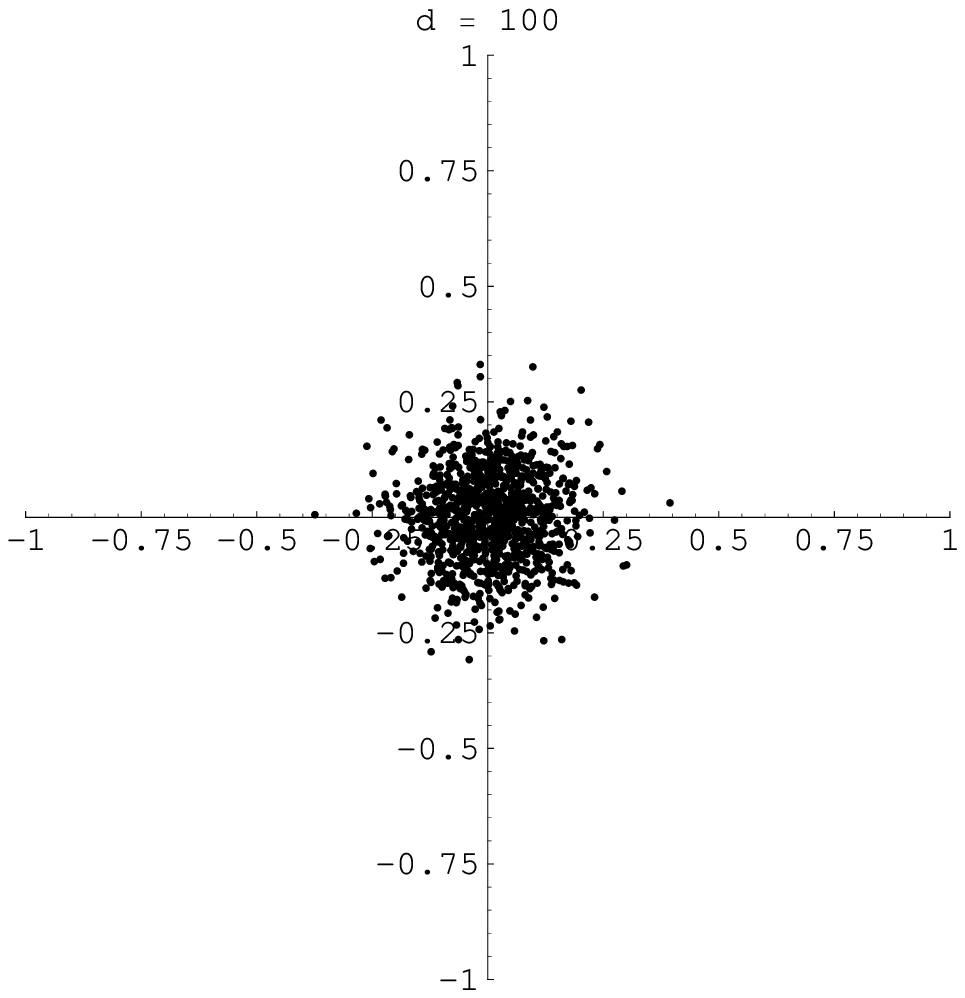} 
	\caption{Projection of uniform samples on spheres of various dimensions $d$} 
	\label{figure:spheres} 
\end{figure*} 

This happens no matter which coordinates are chosen for the projection. The 2-D picture is the same: a small core in the centre, with nearly nothing around. 
If we were to attempt to calculate the diameter based on this picture it will seem that the sphere, actually of constant diameter, is shrinking (with sample size kept constant). 

Although a bit more difficult to imagine, when taking a look at the equator of the sphere, most points will lie a short distance from it. 
Again, it doesn't matter if the equator is the standard one -- {\em any} equator will have this concentration around it. 
 
The most convenient setting for finding ``high-dimensional'' objects is $\mathbb{R}^d$ but the phenomenon of concentration is phrased in terms of measure and distances, so it can be defined on metric spaces equipped with a probability measure. As usual, whenever we take the measure of a subset of the space we will restrict the discussion to measurable sets. 

\begin{definition} 
Given a metric space $(\Omega ,\rho)$ equipped with (probability) measure $\mu$, $A_{\epsilon}$ is the $\epsilon$-neighborhood of $A\subset\Omega$, that is  
\[A_{\epsilon}=\{\omega\in\Omega|\rho(\omega,a)<\epsilon\text{ for some } a\in A\}\] 
\end{definition} 

We want to define a function $\alpha$ s.t. if $\mu(A)\geqslant 1/2$ then 
\[\mu(A_{\epsilon})\geqslant 1-\alpha ( \epsilon )\] 

In a sense we will pick the best such $\alpha$ and call it the {\em  concentration function}: 

\begin{definition} 
Given a metric space equipped with (probability) measure $(\Omega ,\rho, \mu)$ its concentration function $\alpha = \alpha_{(\Omega ,\rho, \mu)}$ is defined as 
\begin{align*} 
\alpha(0)&=1/2\\ 
\alpha(\epsilon)& = \sup\{1-\mu(A_{\epsilon})|A\subset\Omega,\mu(A)\geqslant\tfrac{1}{2}\}\quad ,\,\epsilon > 0\\ 
\end{align*} 
\end{definition} 

To put it less formally, we are trying to measure how much of the space remains after ``fat'' is added to a 
somewhat large set in the form of an $\epsilon$ neighborhood. When very little remains, we say that the concentration of measure takes place. 
Making the concept of ``little'' more precise, {\em normal }concentration of measure is considered to be taking place when $C$, $c>0$ exist such that 
\[\alpha(\epsilon)<C\text{e}^{-c d \epsilon ^2}\] 
Where $d$ is the (intrinsic) dimension. 
\section{Examples} 
\begin{example}{$\mathbb{R}^{d}$ with the Gaussian measure $\gamma$.}\label{eg:gauss} 
The Gaussian measure is defined on the completion of the Borel $\sigma$-algebra. It is the generalization of the normal probability measure on $\mathbb{R}$. For any $A$ in the above-defined $\sigma$-algebra of measurable sets, 
\[\gamma(A)= \frac{1}{{(2\pi)}^{d/2}}\int_{A}{\text{e}^{-\frac{||x||^2}{2}}\text{d}{\lambda}^d (x)}\] 
where ${\lambda}^d$ is the $d$-dimensional Lebesgue measure. 

The space $\mathbb{R}^{d}$with the measure $\gamma$ and the standard Euclidean metric coming from the $L_2$ norm 
produces a concentration function bounded as follows: 
\[\alpha(\epsilon)\leqslant\text{e}^{-\epsilon^2/2}\] 
This does not produce a normal concentration function. 
This is due to a certain stretching of the space that occurs as $d$ grows, something that is not desirable from our perspective. 
In the upcoming example of Hamming cubes we will show explicitly how a distance measure can be ``properly'' normalized so as to produce a normal concentration function. 
\end{example} 

\begin{example}{The spheres $\mathbb{S}^d$ in $\mathbb{R}^{d+1}$.} 
\begin{figure*} 
\centering 
		\includegraphics[height=2in,width=4in]{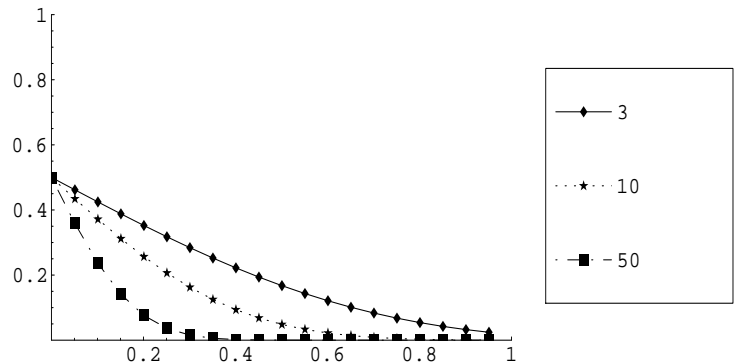} 
	\caption{The concentration functions of various spheres} 
	\label{figure:spheresConc} 
\end{figure*} 

Taken with the geodesic or Euclidian distance and the normalized invariant measure they produce a family of concentration 
functions bounded as follows \cite{L}: 
\[\alpha_d(\epsilon)\leqslant\text{e}^{-(d-1)\epsilon^2/2}\] 
In this case an exact expression for the concentration function is known \cite{benyamini} p.282, as the half-sphere, of all subsets of measure at least 1/2 will always produce the smallest $\epsilon$-neighborhood, no matter the $\epsilon$. 
The measure of this neighborhood is given by 
\[\left(\int_{-\pi/2}^{\epsilon}{ \cos^{d-2}x \text{d}x} \right)\Big/\left(\int_{-\pi/2}^{\pi/2}{ \cos^{d-2}x \text{d}x} \right)\] 
An estimation of this value can be arrived at via numeric integration. 
A plot of the resulting concentration functions, for several values of $d$, appears in Figure \ref{figure:spheresConc}.

\end{example} 
This example is particularly interesting as increasing dimension leads to increased concentration of measure phenomenon. 
\begin{definition}  
A sequence of spaces $(\Omega_d )_{d=1}^{\infty}$ a {\em normal L\'evy family} \cite{milman:86} if $C$, $c>0$ exist such that 
\[\alpha_d(\epsilon)<C\text{e}^{-c\epsilon ^2d}\] 
Thus it the same notion of what is a tight concentration function as above. 
\end{definition} 
\begin{example}{The Balls $\mathbb{B}^d$.} 
Taken with the Euclidean distance and the uniform probability measure (d-dimensional Lebesgue) form a normal L\'evy family. 
\end{example} 
\begin{example}{The Hamming Cubes $\Sigma ^d$.} 
The Hamming cubes, as defined in Section \ref{sec:curse}, with the normalized distance and uniform measure form a normal L\'evy family. 
\end{example} 
\section*{} 
The concentration of measure can be equivalently described in terms of Lipschitz functions. 

\begin{definition} 
A function $f:\Omega\rightarrow\mathbb{R}$ is {\em 1-Lipschitz }if  
\[\forall x,y\in\Omega,\,\, |f(x)-f(y)|\leqslant \rho (x,y)\] 
In general a function $f$ is p-Lipschitz if for all $x$ and $y$, $|f(x)-f(y)|\leqslant p\rho (x,y)$. 
\end{definition} 
We note that $\rho (q,.)$ the function assigning to $\omega$ its distance to $q$ is 1-Lipschitz.

In spaces that have a tight concentration $\alpha$, Lipschitz functions will be nearly constant, and one candidate for this constant is a median value. 

\begin{definition} 
\noindent A median of function $f:(\Omega,\rho,\mu)\rightarrow\mathbb{R}$ is any number $M$ satisfying: 
\[\mu\{\omega | f(\omega)\leqslant M\}\geqslant 1/2\text{ and }  \mu\{\omega | f(\omega) \geqslant M \}\geqslant 1/2\] 
This is a slight generalization of the usual concept of median of a set of numbers, only no attempt is made to make it unique.  
Discrete functions may very well have multiple valid values for $M$. 
\end{definition} 

\begin{theorem} 
\label{thm:lip} 
For a 1-Lipschitz function f defined on space $(\Omega,\mu,\rho)$: 
\[\forall\epsilon > 0,\quad \mu\{\omega |\ |f(\omega)-M|>\epsilon\} < 2\alpha(\epsilon)\] 
\end{theorem}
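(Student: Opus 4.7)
The plan is to split the event $\{|f-M|>\epsilon\}$ into its two one-sided tails and bound each one by $\alpha(\epsilon)$ using the definition of the median and the 1-Lipschitz property of $f$. First I would set $A=\{\omega\in\Omega\mid f(\omega)\leqslant M\}$ and $B=\{\omega\in\Omega\mid f(\omega)\geqslant M\}$; by the definition of the median both sets have $\mu$-measure at least $1/2$, so they are eligible inputs to the concentration function.

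Next I would show that the $\epsilon$-neighbourhood $A_\epsilon$ is contained in $\{f<M+\epsilon\}$. Indeed, for any $\omega\in A_\epsilon$ there exists $a\in A$ with $\rho(\omega,a)<\epsilon$, and then the 1-Lipschitz condition gives
\[f(\omega)\leqslant f(a)+|f(\omega)-f(a)|\leqslant M+\rho(\omega,a)<M+\epsilon.\]
Consequently $\{f\geqslant M+\epsilon\}\subseteq\Omega\setminus A_\epsilon$, and the definition of the concentration function yields $\mu\{f\geqslant M+\epsilon\}\leqslant 1-\mu(A_\epsilon)\leqslant\alpha(\epsilon)$. An entirely symmetric argument applied to $B$ (now using $f(\omega)\geqslant f(b)-\rho(\omega,b)>M-\epsilon$) gives $\mu\{f\leqslant M-\epsilon\}\leqslant\alpha(\epsilon)$.

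Finally, I would finish with a union bound: since $\{|f-M|>\epsilon\}\subseteq\{f>M+\epsilon\}\cup\{f<M-\epsilon\}\subseteq\{f\geqslant M+\epsilon\}\cup\{f\leqslant M-\epsilon\}$, additivity of $\mu$ gives
\[\mu\{|f-M|>\epsilon\}\leqslant 2\alpha(\epsilon),\]
and the strict inequality in the statement follows from the strict inequality $\rho(\omega,a)<\epsilon$ hard-wired into the definition of $A_\epsilon$, which forces the bound $f(\omega)<M+\epsilon$ to also be strict on $A_\epsilon$.

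There is no real obstacle here; the argument is essentially a translation of the concentration inequality from sets to Lipschitz functions. The only point requiring some care is keeping the strict versus non-strict inequalities consistent with the definition of $A_\epsilon$ and of the concentration function $\alpha$, and making sure that both tails of $f-M$ are treated by the same two-step procedure (median set $\to$ its $\epsilon$-neighbourhood $\to$ complement).
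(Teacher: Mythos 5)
Your proposal is correct and follows essentially the same route as the paper's proof: the same median sublevel/superlevel sets, the same Lipschitz containment of their $\epsilon$-neighbourhoods, and the same application of $\alpha$ to each tail; your union bound on the two tails is just the De Morgan dual of the paper's inclusion--exclusion step $\mu(B\cap\tilde{B})\geqslant\mu(B)+\mu(\tilde{B})-1$. (Note that, like the paper's own argument, what you actually obtain is the non-strict bound $\leqslant 2\alpha(\epsilon)$; the closing remark about strictness does not upgrade it, but this is a shared cosmetic issue rather than a gap.)
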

\begin{proof} 
Fix $\epsilon >0$. Set 
\[A=\{\omega | f(\omega)\leqslant M\}\text{ and } B=\{\omega | f(\omega)\leqslant M +\epsilon \}\] 
then 
\begin{align*}A_{\epsilon}&= \{\omega | \rho(\omega, a)<\epsilon\text{ for some } a\in A\}\\ 
													 &\subset \{\omega |f(\omega)-f(a)\leqslant\epsilon \text{ for some } a\in A\}\\ 
													 &\subset B  
\end{align*} 
Then by definition of $\alpha$, we have  
\[\mu(B)\geqslant 1- \alpha(\epsilon)\] 
The same argument can be applied to  
\[\tilde{A}=\{\omega | f(\omega)\geqslant M\}\text{ and } \tilde{B}=\{\omega | f(\omega)\geqslant M -\epsilon \}\] 
Since the probability of the entire space is 1, we rework the well-known 
\[\mu(B\cup\tilde{B})=\mu(B)+\mu(\tilde{B})-\mu(B\cap\tilde{B})\leqslant 1\] 
into 
\[\mu(B\cap\tilde{B})\geqslant\mu(B)+\mu(B)-1\] 
thus 
\[\mu(B\cap\tilde{B})\geqslant 1- \alpha(\epsilon) + 1- \alpha(\epsilon) -1 = 1-2\alpha(\epsilon)\] 
\end{proof} 

Lipschitz functions allow us to formulate the concept of observable diameter, as illustrated by Figure \ref{figure:spheres}, more rigorously. 
\begin{definition} 
Let $\kappa>0$ be fixed. 
The $\kappa$-observable diameter of $(\Omega, \rho, \mu)$, denoted $\mathrm{obs-diam}_\kappa \Omega$ is defined as ($M_f$ is the median of $f$): 
\[\mbox{obs-diam}_\kappa \Omega = 2\inf\left\{ D>0 \vert \forall \text{ 1-Lipschitz } f, \mu\{\omega |\ |f(\omega)-M_f|>D\}\leqslant\kappa\right\}\] 
\end{definition} 
We could reformulate the concentration of measure in terms of how fast the observable diameter shrinks to 0. 

Actually calculating the exponential expressions for $\alpha$ is much harder: the sometimes complicated derivations can be found in \cite{L} and \cite{milman:86}. 
However certain details of such proofs are important to understand the role normalization of distance plays. 
In spaces where the diameter is allowed to grow, e.g. cubes of dimension $d$ as at the beginning of the chapter or Hamming cubes with  
non-normalized measure 
\[\rho(\boldsymbol{x},\boldsymbol{y})=\sum_{i=1}^{d}{|x_i-y_i|}\] 
the concentration function may end up with a bound of the form: 
\[\alpha(\epsilon)\leq C\text{e}^{-c\epsilon^2}\] 
that is, without dependence on $d$. 

For purposes of similarity search however what matters is not the absolute value of the distance, but the proportion in relation to the space. 
That is, the issue is not so much if the range query is of radius 10 or 0.1 but what proportion of the space falls in a ball with this radius. 
So in an asymptotic analysis the aim is to keep the radius constant as dimension is growing -- in ``reasonable'' spaces  this is accomplished  
through normalizing by the diameter. 
In general spaces another normalization, perhaps utilizing the expected distance between two points, is required. 

We can illustrate the normalization by looking at two versions of the ``Blowing Up Lemma'' (e.g. \cite{pestov:07b} )for the Hamming Cubes: 
\begin{theorem} 
For a 1-Lipschitz function $f:\Sigma^d\rightarrow\mathbb{R}$ with respect to the non-normalized Hamming distance, we have that 
\[\forall\epsilon > 0,\quad \mu\{\omega |\ |f(\omega)-E(f)|>\epsilon\} < 2\text{e}^{\frac{-\epsilon^2}{d}}\] 
If however the function is 1-Lipshitz w.r.t. the normalized Hamming distance, we have 
\[\forall\epsilon > 0,\quad \mu\{\omega |\ |f(\omega)-E(f)|>\epsilon\} < 2\text{e}^{-d\epsilon^2}\] 
\end{theorem}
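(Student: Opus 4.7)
The plan is to prove both bounds by a single unified argument using a Doob martingale on the coordinates of $\Sigma^d$ together with the Azuma--Hoeffding inequality. Since the uniform measure on $\Sigma^d$ is the product of $d$ copies of the uniform measure on $\{0,1\}$, a random $\omega\sim\mu$ can be identified with an i.i.d.\ sequence $X_1,\ldots,X_d$ of Bernoulli$(1/2)$ bits. This tensor-product structure is exactly what makes a martingale decomposition available, and it is also the reason the dimension $d$ enters the bound explicitly.

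First I would set up the filtration $\mathcal{F}_i=\sigma(X_1,\ldots,X_i)$ and define the Doob martingale $M_i=E[\,f(\omega)\mid\mathcal{F}_i\,]$, so that $M_0=E(f)$ and $M_d=f(\omega)$. The next step is to bound the martingale increments $|M_i-M_{i-1}|$ using the Lipschitz assumption. Fixing the values of $X_1,\ldots,X_{i-1}$ and averaging over the remaining coordinates, one compares two conditional expectations that differ only in whether $X_i=0$ or $X_i=1$. A coupling in which two samples agree on every coordinate except the $i$-th shows that the pointwise difference of $f$ along this coupling is bounded by the distance between those two samples, which is $1$ under the non-normalized Hamming distance and $1/d$ under the normalized one. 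Averaging over the remaining coordinates preserves this bound, giving $|M_i-M_{i-1}|\leqslant 1$ in the first case and $|M_i-M_{i-1}|\leqslant 1/d$ in the second.

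With the increment bounds in hand, Azuma--Hoeffding yields
\[
\mu\bigl\{\,|f(\omega)-E(f)|>\epsilon\,\bigr\}\;\leqslant\; 2\exp\!\left(-\frac{\epsilon^{2}}{2\sum_{i=1}^{d}c_i^{2}}\right).
\]
In the non-normalized case $\sum c_i^2=d$, and in the normalized case $\sum c_i^2=d\cdot(1/d)^2=1/d$. Substituting gives concentration of the form $2e^{-\epsilon^2/d}$ and $2e^{-d\epsilon^2}$ respectively (up to the constant in the exponent, which can be absorbed into the convention used for Azuma--Hoeffding; the contrast between the two regimes is entirely due to the factor $d^2$ between the two values of $\sum c_i^2$).

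The main obstacle is the martingale-increment estimate: one has to be careful that conditioning on $\mathcal{F}_{i-1}$ and integrating out $X_{i+1},\ldots,X_d$ really does preserve the pointwise Lipschitz bound on a single-coordinate flip. This is routine once the product structure is made explicit (it amounts to observing that expectation of a bounded random variable lies within the interval it ranges over), but it is the one place where the independence of the coordinates is essential; without it, neither the martingale nor the coupling argument would go through. Everything else is a direct application of a standard concentration inequality, and the theorem also illustrates why the normalization of the metric is precisely what upgrades a generic Azuma estimate into a normal Lévy-family bound.
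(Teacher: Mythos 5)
Your proof is correct in substance, but it takes a different route from the paper. The paper does not prove the non-normalized inequality at all: it quotes it as a known ``Blowing Up Lemma'' from the literature, and then obtains the normalized case purely by rescaling --- a function that is 1-Lipschitz for the normalized Hamming distance is $(1/d)$-Lipschitz for the non-normalized one, so $df$ is 1-Lipschitz and one substitutes $d\epsilon$ for $\epsilon$ in the first bound to get $2\mathrm{e}^{-d^2\epsilon^2/d}=2\mathrm{e}^{-d\epsilon^2}$. You instead prove both statements from scratch via the Doob martingale on coordinates and Azuma--Hoeffding; your observation that the two regimes differ only by the factor $d^2$ in $\sum c_i^2$ is exactly the paper's rescaling argument in disguise. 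What your approach buys is self-containedness (the paper's argument is vacuous without an external proof of the first inequality, which is precisely the martingale argument you give) and it makes transparent where the product structure of $(\Sigma^d,\mu)$ is used. One point deserves more care than your parenthetical ``absorbed into the convention'': the form of Azuma you wrote down, $2\exp(-\epsilon^2/(2\sum c_i^2))$, gives $2\mathrm{e}^{-\epsilon^2/(2d)}$ in the non-normalized case, which is strictly \emph{weaker} than the claimed $2\mathrm{e}^{-\epsilon^2/d}$ and cannot be massaged into it. The fix is standard but should be stated: conditionally on $\mathcal{F}_{i-1}$ the increment $M_i-M_{i-1}$ takes two equiprobable values whose difference is at most $c_i$, i.e.\ it ranges over an interval of length $c_i$ rather than $2c_i$, so Hoeffding's lemma gives the bounded-differences (McDiarmid) constant $2\exp(-2\epsilon^2/\sum c_i^2)=2\mathrm{e}^{-2\epsilon^2/d}$, which is stronger than what the theorem asserts. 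With that one repair your argument is complete and, unlike the paper's, proves the base inequality rather than citing it.
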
 

The second part follows from the first since a 1-Lipshitz function $f$ w.r.t. the normalized Hamming distance is $1/n$-Lipshitz w.r.t the non-normalized one. 
Therefore $nf$ is 1-Lipshitz w.r.t. the non-normalized distance and so 
\[\forall\epsilon > 0,\quad \mu\{\omega |\ |df(\omega)-dE(f)|>d\epsilon\} < 2\text{e}^{\frac{-d^2\epsilon^2}{d}}=2\text{e}^{-d\epsilon^2}\] 

We are making the broad claim that the exponential decrease in $d$ is a broad phenomenon for spaces that are properly normalized for similarity search. 
So if no such dependence on $d$ is observed as in Example \ref{eg:gauss} it may be just a matter of bad choice of distance. 
\section{Link to concentration of measure}\label{sec:link} 
We would like to demonstrate why in so many familiar spaces indexing, and in particular indexing using pivots is impossible. 
The use of concentration of measure in indexing is noted in \cite{pestov:00}.  
It relies on the observation that   
\[\rho (\cdot,p):\Omega\rightarrow\mathbb{R}:\omega\mapsto \rho (\omega,p)\] 
is 1-Lipschitz for any $p$ and in particular a pivot. Hence Theorem \ref{thm:lip} can be applied to obtain a bound on the deviation from the median $M=M_p$ of function $\rho (\cdot,p)$: 
\[\forall r > 0,\quad \mu\{\omega |\ |\rho (\omega,p)-M|>r\} < 2\alpha(r)\] 
since $p$ is a general element of $\Omega$ the statement holds individually for each pivot $p_i$. We combine these statements: 
\begin{equation*} 
\forall r > 0,\quad \mu\{\omega |\sup_{1\leqslant i\leqslant k}|\rho (\omega,p_i)-M_i|>r / 2\} < 2k\alpha(r / 2) 
\end{equation*} 
as the probability of the union can always be upperbounded, if roughly, by the sum of the probabilities. 
We note that no assumptions about independence are used: the sequence $(p_i)$ can be chosen in any way. 
We used $r/2$ so as to get rid of the $M_i$: 

\[\forall r > 0,\quad \mu\{\omega |\sup_{1\leqslant i\leqslant k}|\rho (\omega,p_i)-\rho (q,p_i)|>r\} < 2k\alpha(r / 2)\] 

Comparing this to the definition of $C$: 
\[C=\{x \vert \rho _k(q,x) > r \}\] 
it is apparent that the only difference between the set we are upperbounding and $C$ is that one is defined over all of $\Omega$ and the other, just for $X$. 
We could introduce a set  
\[\mathcal{C}=\{\omega \vert \rho _k(q,\omega) > r \}\] 
and think of $C$ as the observation of $\mathcal{C}$ under $\mu_{\# }$. 

To restate the upperbound in terms of $\mathcal{C}$,  
\[\forall r > 0,\quad \mu (\mathcal{C} )< 2k\alpha(r / 2)\] 
So in effect, assuming that $2k\alpha(r / 2)$ is small, we have (roughly): 
\begin{equation} 
\mu( \mathcal{C} )\approx 0\label{zerostar1} 
\end{equation} 
this is along the lines of \cite{pestov:00} yet we would like to find out what happens to $C$.  
The point here is that if something happens in $\Omega$ it will not necessarily hold {\em mutatis mutandis} in the dataset $X$. 
\newline The statement 
\begin{equation} 
\mu_{\# }(C)\approx 0\label{zerostar2} 
\end{equation} 
describes a situation in which much of the dataset $X$ needs to be totally searched. 
The more theoretical statement \eqref{zerostar1} refers to a situation where the query searches over all the elements of $\Omega$ in effect forcing $X=\Omega$, something we would like to avoid. 

The probability measure $\mu_{\# }$ is a function of $n$, the size of the dataset.  
In fact the dataset is a sample (i.i.d) of size $n$ from the probability metric space $(\Omega,\mu)$ and hence $\mu_{\# }=\mu_n$ is also a random variable. 
Such complications show that it is important to fully describe the variables that underly the apparently simple statements. 

If we let $q$ be fixed, equation \eqref{zerostar1} in effect states 
\[\mu( \mathcal{C}_{q,p_1\ldots p_{k(n)},r(n)})\longrightarrow 0 \text{ as } n,d\longrightarrow\infty\] 
where $n$, $d$ are related as described in section \ref{sec:curse} and
\[\mathcal{C}_{q,p_1\ldots p_{k(n)},r(n)}=\{x \vert \rho _k(q,x) > r \} \] 
like before, with all variables made explicit.
Similarly \eqref{zerostar2} states 
\[\mu_{n}( C_{q,p_1\ldots p_{k(n)},r(n)})\stackrel{P}{\longrightarrow} 0 \text{ as } n,d\longrightarrow\infty\] 
where convergence in probability is based on the sample, also known as our dataset $X$ of size $n$. 

This situation is further complicated by considering the query center $q$, the pivots $p_i$, the radius $r$ and even $k$ to be random variables. 
Here we will allow $q$ to vary over all of $\Omega$, while $r$ can be thought of as proportional to the nearest neighbour in $X$ to $q$. 
In the next section we will see that the median nearest neighbour distance in our properly normalized  L\'evy families is bounded away from 0 asymptotically. 

\section{Radius of queries in L\'evy families}\label{sec:radius} 
We have described above how we would like to normalize spaces so that the ``average'' distance between two points stays about the same. 
Here we will show why this also implies the typical radius of a query -- which here we will assume to be the distance to the nearest neighbour of query centre -- also behaves nicely. 
\begin{lemma}[M. Gromov, V.D. Milman]\cite{gromov:83}\label{l:m} 
Let $(\Omega,\rho,\mu)$ denote a metric space with measure and $\alpha$ its concentration function. 
Then if $A\subset\Omega$ is such that $\mu(A)>\alpha(\gamma)$ for some $\gamma>0$, it implies that $\mu(A_{\gamma})>1/2$. 
\end{lemma}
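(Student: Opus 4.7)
The plan is to proceed by contraposition: assume $\mu(A_\gamma) \leq 1/2$ and derive that $\mu(A) \leq \alpha(\gamma)$, which contradicts the hypothesis. This reduces the lemma to an almost symmetric statement about how $A$ and the complement of its $\gamma$-neighbourhood must sit apart from each other.

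The key geometric step is the observation that $A$ and $(\Omega \setminus A_\gamma)_\gamma$ are disjoint. Set $B = \Omega \setminus A_\gamma$, so by the contrapositive assumption $\mu(B) \geq 1/2$. If some point $\omega$ lay in both $A$ and $B_\gamma$, then there would exist $b \in B$ with $\rho(\omega,b) < \gamma$; but $\omega \in A$ would then force $b \in A_\gamma$, contradicting $b \in B$. So $A \cap B_\gamma = \emptyset$, and therefore
\[\mu(A) \leq 1 - \mu(B_\gamma).\]

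Now I apply the definition of the concentration function to $B$: since $\mu(B) \geq 1/2$, we have $\mu(B_\gamma) \geq 1 - \alpha(\gamma)$ by the very definition of $\alpha$. Combining this with the previous inequality gives $\mu(A) \leq \alpha(\gamma)$, contradicting the assumption that $\mu(A) > \alpha(\gamma)$. Hence the original assumption $\mu(A_\gamma) \leq 1/2$ is false, and the lemma follows.

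The only step that requires any care is the disjointness argument for $A$ and $B_\gamma$, where one must be mindful of whether the $\epsilon$-neighbourhood uses a strict or non-strict inequality; the convention fixed earlier in the excerpt (strict inequality $\rho(\omega,a)<\epsilon$) makes the argument go through cleanly. Everything else is a direct invocation of the definition of $\alpha$, so I do not expect any real obstacle.
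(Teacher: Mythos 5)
Your proof is correct and is essentially the paper's own argument: both set $B=\Omega\setminus A_\gamma$, apply the definition of $\alpha$ to $B$ (which has measure at least $1/2$), and use the disjointness of $A$ and $B_\gamma$ to force $\mu(A)\leqslant\alpha(\gamma)$. You have merely spelled out the disjointness step that the paper states without justification (and your version avoids the paper's typo ``$\mu(B)\geqslant 0$'' where $\mu(B)\geqslant 1/2$ is meant).
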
 
\begin{proof} 
Assume not and let $B=A_{\gamma}^c$. 
Then $\mu(B)\geqslant 0$, which implies $\mu(B_{\gamma}^c)\leqslant \alpha(\gamma)$. 
But $\mu(A)\leqslant \mu(B_{\gamma}^c)$ , a contradiction. 
\end{proof} 

\begin{theorem} 
Let $(\Omega_d,\rho_d,\mu_d,X_d)_{d=1}^{\infty}$ be a sequence of metric spaces with measure, forming a L\'evy family together with i.i.d. samples $X_d$ from $\Omega_d$. 
\newline Assume that $n=n_d=|X_d|=d^{o(1)}$. 
Furthermore, if $M_d$ denotes the median value of $\left\{ \rho_d(\omega_1,\omega_2)\vert \omega_i\in\Omega_d \right\}$ we assume that  
$M_d=\Theta(1)$, that is, for some fixed $c_1,c_2>0$, $\quad\forall d, c_1<M_d<c_2$. 

Let $\rho_d^{(NN)}(\omega)$ denote the distance to the nearest neighbour of $\omega\in\Omega_d$ in $X_d$. 
Define $m_d$ to be the median of $\rho_d^{(NN)}(\omega)$. 
Then there exists some $c_3>0$ and some $D$ such that $\forall d\geqslant D$, $m_d>c_3$. 
\end{theorem}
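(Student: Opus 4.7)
The plan is a proof by contradiction: I will show that if $m_d$ were to fail to stay bounded away from $0$, then for arbitrarily small $r > 0$ and along a subsequence of $d$'s, the neighbourhood $\bigcup_{x \in X_d} B(x, r)$ would have $\mu_d$-measure at least $1/2$. Combining L\'evy concentration on $\Omega_d$ with a union bound over the $n_d = d^{o(1)}$ sample points, I will show this is impossible.

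The central technical ingredient is that for each fixed $x \in \Omega_d$, the function $\rho_d(x, \cdot)$ is 1-Lipschitz; letting $M_x$ denote its median, the definition of $\alpha_d$ (applied to the half-space $\{y: \rho_d(x,y) \geq M_x\}$ of measure $\geq 1/2$) gives
\[
\mu_d(B(x, M_x - \epsilon)) \leq \alpha_d(\epsilon).
\]
Under the L\'evy assumption this is exponentially small in $d$ for any fixed $\epsilon > 0$, so $\mu_d(B(x, r))$ is exponentially small provided $r$ stays a definite distance below $M_x$. Hence the crux of the proof is to guarantee that $M_x \geq c_1/2$ uniformly over the random sample $X_d$.

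For this I would argue in two stages. First, $x \mapsto M_x$ is itself 1-Lipschitz: if $\rho_d(x, x') \leq \delta$, the functions $\rho_d(x, \cdot)$ and $\rho_d(x', \cdot)$ differ pointwise by at most $\delta$, forcing their medians to differ by at most $\delta$. So by L\'evy concentration applied to the function $x \mapsto M_x$, its values cluster around a median $M^*_d$ with $\mu_d(\{x : |M_x - M^*_d| > \epsilon\}) \leq 2\alpha_d(\epsilon)$. Second, a Fubini argument on $\Omega_d \times \Omega_d$, using that on the large set $\{x : M_x \approx M^*_d\}$ the pair distance $\rho_d(x, y)$ concentrates around $M^*_d$, forces $|M^*_d - M_d| \to 0$ as $d \to \infty$. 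Combined with the hypothesis $M_d \geq c_1$, this yields $M^*_d \geq 3c_1/4$ for large $d$ and hence $\mu_d(\{x : M_x < c_1/2\}) \leq 2\alpha_d(c_1/4) \leq C' e^{-c' d}$ for some constants $C', c' > 0$.

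Since $X_d$ is an i.i.d.\ sample of size $n_d = d^{o(1)}$, a union bound over the sample yields, with probability $1 - n_d C' e^{-c' d} \to 1$, that every $x \in X_d$ satisfies $M_x \geq c_1/2$. On this event a further union bound gives
\[
\mu_d\!\left(\bigcup_{x \in X_d} B(x, c_1/4)\right) \;\leq\; n_d \cdot \alpha_d(c_1/4) \;\leq\; d^{o(1)} \cdot C e^{-c(c_1/4)^2 d} \;\longrightarrow\; 0,
\]
contradicting the hypothesis $m_d \leq c_1/4$. Setting $c_3 = c_1/4$ completes the argument (with probability tending to $1$, and almost surely by Borel--Cantelli given the exponential rate). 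The step I expect to be the main obstacle is the passage from the iterated median $M^*_d$ to the product median $M_d$: for \emph{expectations} this would be a routine Fubini calculation, but medians demand simultaneous concentration in both factors, and the resulting rate must still dominate $n_d = d^{o(1)}$ to close the union-bound step.
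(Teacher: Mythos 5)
Your argument is correct in substance, but it takes a genuinely different route from the paper, and the difference matters for the strength of the conclusion. The paper's proof never looks at individual centres' medians: it applies the pigeonhole principle to $\mu_d\bigl(\bigcup_{x\in X_d}B_{m_d}(x)\bigr)\geqslant 1/2$ to extract a \emph{single} point $\omega_d$ with $\mu_d(B_{m_d}(\omega_d))\geqslant 1/(2n_d)$, and then uses the Gromov--Milman lemma (if $\mu(A)>\alpha(\gamma)$ then $\mu(A_\gamma)>1/2$) followed by one more $\gamma$-enlargement to inflate that ball into a set of measure $1-C\mathrm{e}^{-c\gamma^2 d}$ whose diameter is still below $c_1$; this contradicts $M_d>c_1$ via the product measure. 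Because the pigeonhole step works for \emph{any} $n_d$-point subset, the paper's conclusion is deterministic in the sample --- the i.i.d.\ hypothesis is never invoked. Your proof instead bounds $\mu_d(B(x,c_1/4))$ for every \emph{typical} centre $x$, which forces you to control the exceptional set $\{x: M_x<c_1/2\}$ and then use the randomness of $X_d$ plus a union bound to ensure the sample avoids it; you therefore obtain $m_d>c_1/4$ only with probability $1-n_dC'\mathrm{e}^{-c'd}$ (a.s.\ eventually by Borel--Cantelli), with $D$ depending on the realization, whereas the paper gets it surely with a deterministic $D$. On the other hand, the step you flagged as the main obstacle does close: fixing the lower median $M_x=\inf\{t:\mu_d\{y:\rho_d(x,y)\leqslant t\}\geqslant 1/2\}$ makes $x\mapsto M_x$ a well-defined $1$-Lipschitz function, and the Fubini estimate $(\mu_d\times\mu_d)\{|\rho_d-M^*_d|>2\epsilon\}\leqslant 4\alpha_d(\epsilon)$ pins any product median $M_d$ to within $2\epsilon$ of $M^*_d$ once $4\alpha_d(\epsilon)<1/2$, which is all you need. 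Your route costs more machinery and yields a weaker (probabilistic) statement, but the intermediate fact it isolates --- that balls of radius bounded below the median distance have exponentially small measure at typical centres --- is exactly the concentration mechanism exploited later in the paper, so the detour is not without value.
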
 
\begin{proof} 
Assume the conclusion fails, then without loss of generality and proceeding to subsequence if necessary, $m_d\rightarrow 0$. 
By definition of $m_d$, we know that for any $d$, 
\[\mu_d\left( \bigcup_{x\in X}{B_{m_d}(x)}\right)\geqslant \frac{1}{2}\] 
It follows that  
\[n_d \sup_{\omega\in\Omega_d}{\mu_d \left( B_{m_d}(\omega)\right)}\geqslant \frac{1}{2}\] 
and so we can find for any $d$ a point $\omega_d\in\Omega_d$ such that 
\[\mu_d\left( B_{m_d}(\omega_d)\right)\geqslant \frac{1}{2n}.\] 
If we denote by $\alpha_d$ the concentration functions of our spaces $\Omega_d$ we know by assumption the existence of $C,c>0$ s.t. 
\[\forall d, \alpha_d(\epsilon)\leqslant C\text{e}^{-c\epsilon^2d}.\] 
Hence we can find $d'$ s.t. $\alpha_{d'}(\gamma) <1/{2n_{d'}}$ and $m_{d'}<{c_1}/8$ where $\gamma = c_1/8$ as well: 
this is since eventually, 
\[C\text{e}^{-c\gamma^2d} < \frac{1}{2n_d} =\frac{1}{d^{o(1)}}\] 
Then by lemma \ref{l:m} 
\[\mu_{d'}\left( B_{m_{d'}}(\omega_{d'})\right)_{\gamma}\geqslant \frac{1}{2}\] 
It then follows that 
\[\mu_{d'}\left(\left( B_{m_{d'}}(\omega_{d'})\right)_{\gamma}\right)_{\gamma}\geqslant 1-C\text{e}^{-c\gamma^2d'} \] 
that is, since $m_{d'}+2\gamma<3c/8$, 
\[\mu_{d'}\left(B_{3c_1/8}(\omega_{d'})\right)\geqslant 1-C\text{e}^{-c\gamma^2d'} \] 
But 
\[\text{diameter}\left(B_{3c_1/8}(\omega_{d'})\right)\leqslant \frac{3c_1}{4}\] 
So in $\Omega_{d'}\times\Omega_{d'}$ the measure of the set of points $(\omega_1,\omega_2)$ for which $\rho_{d'}(\omega_1,\omega_2)<c_1$ is at least 
\[\left(1-\frac{1}{2n_{d'}}\right)^2,\] 
obviously contradicting $M_{d'}>c_1$. 

\end{proof} 

This result frees us from having to consider a radius that vanishes to 0 as $n$, $d$ go to infinity. 
With this achieved, let us recap our goal: to show that most queries are slow, i.e. what we casually referred to as equation \ref{zerostar2} takes place for most queries. 
What we in fact want is something along the lines of: 
\begin{equation}\label{eq:z2} 
\mathrm{median}_{q,p_i,r}\left(\mu_{n}( C_{q,p_1\ldots p_{k(n)},r(n)})\right)\stackrel{P}{\longrightarrow} 0 \text{ as } n,d\longrightarrow\infty, 
\end{equation} 
where the median is taken over all the queries under consideration: any $q\in\Omega_d$ and any $r$ at least as large as the distance to the nearest neighbour of $q$ in $X$. 
As well, for each $d$ and $n=n_d$ we would like to also consider all possible pivot-based index schemes (as long as $k$ is within certain ranges we will specify later). 
Why the median? The aim is to show a certain behaviour for {\em many} queries: at least half is dramatic enough. 
So far we have shown, although the proof was just sketched (and with the detail about $k$ left out) that  
\begin{equation}\label{eq:z1} 
\mathrm{median}_{q,p_i,r}\left(\mu( \mathcal{C}_{q,p_1\ldots p_{k(n)},r(n)})\right)\longrightarrow 0 \text{ as } n,d\longrightarrow\infty 
\end{equation} 
Which is fine as long as $X=\Omega_d$ and hence the selection of a small dataset from an underlying large space is not taken into account. 
A more likely situation however is of a finite $X=X_d$ and an infinite (or at least much larger) $\Omega=\Omega_d$. 
What we need is to find out when statement \ref{eq:z1} implies \ref{eq:z2}. 
To do so we will summon the powerful machinery of statistical learning theory. 

\chapter{Statistical learning theory} 
Statistical learning theory has already been used in the analysis and design of indexing algorithms \cite{kleinberg} and is a vast subject. 
Instead of concerning ourselves with the whole, we will just focus on an important part: 
the generalization of the Glivenko-Cantelli theorem due to Vapnik and Chervonenkis. 

In keeping with previous notation, $X$ or $(X)$ will denote a sample of $n$ points $X_1,X_2\ldots X_n$, sampled according to some unknown probability measure $\mu$ from a space $\Omega$. 
The sampling is independent, thus in statistical jargon $X$ is an i.i.d sample. 
Although $\mu$ is unknown, we can obtain useful approximations through $X$. 
In particular we can create a measure on the space $\Omega$ induced by $X$, the counting measure $\mu_n=\mu_n(X)$: 
\[\mu_n(A)=\frac{\left\vert A\cap X\right\vert}{\vert X \vert}\] 

In section \ref{sec:search} we also referred to it as $\mu_{\#}$. 
We will use the subscript $n$ to emphasize the sample size and that in effect we are dealing with a whole class of measures. 
As they depend on the sample, these measures are also {\em random} in the sense that they are random variables. 

To talk of a random variable approximating another one, it is necessary to describe what convergence of random variables is. 

\begin{definition}[Convergence of random variables] 
It is said that a sequence of random variables $Y_n$ converges to the random variable $Y$ {\em in probability} if  
\[\forall\epsilon >0,\quad \lim_{n\rightarrow\infty}{P\left(\vert Y_n-Y \vert > \epsilon\right)} = 0\] 
We denoted convergence in probability by $Y_n\stackrel{P}{\rightarrow}Y$. 
{\em Almost everywhere }convergence takes place if a stricter condition is met: 
\[P\left(\lim_{n\rightarrow\infty}{ Y_n}=Y \right) = 1\] 
It is denoted by $Y_n \stackrel{a.e.}{\rightarrow}Y$ 
\end{definition} 
It is not too hard to see that the a.e. convergence implies convergence in probability. 
In fact the well known Law of Large Numbers has two versions, one for each type of convergence. 
We present what is in a sense just a variation on this law, the Glivenko-Cantelli theorem. 
\begin{theorem}[Glivenko-Cantelli] 
Given sample $(X) = X_1,X_2\ldots X_d$ distributed i.i.d. according to {\em any }measure $\mu$ on $\mathbb{R}^d$ we have: 
\[\sup_{r\in\mathbb{R}} \left\vert \mu_d (-\infty , r] - \mu (-\infty , r]\right\vert \stackrel{P}{\longrightarrow} 0 \] 
\end{theorem}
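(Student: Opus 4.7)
The plan is to reduce uniform convergence to pointwise convergence at finitely many carefully chosen thresholds, exploiting the monotonicity of both the empirical and true cumulative distribution functions. Write $F(r) = \mu((-\infty, r])$ for the true CDF and $F_n(r) = \mu_n((-\infty, r])$ for the empirical one; the quantity to control is $\sup_{r\in\mathbb{R}} |F_n(r) - F(r)|$, which is well defined since both functions are monotone non-decreasing, right-continuous, with limits $0$ at $-\infty$ and $1$ at $+\infty$.

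The first observation is that for any fixed $r$, the random variables $\mathbf{1}_{X_i \leqslant r}$ are i.i.d.\ Bernoulli with parameter $F(r)$, so the (strong or weak) law of large numbers gives $F_n(r) \to F(r)$ in probability (in fact almost surely). The same argument applied to $\mathbf{1}_{X_i < r}$ gives $F_n(r^-) \to F(r^-)$ in probability. Thus pointwise convergence is essentially free; the entire content of the theorem is the promotion of this to uniformity in $r$.

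For the uniformization step, I would fix $\epsilon > 0$ and select finitely many thresholds $-\infty = r_0 < r_1 < \ldots < r_N = +\infty$ at roughly the $\epsilon$-quantiles of $F$, chosen so that $F(r_j^-) - F(r_{j-1}) \leqslant \epsilon$ for each $j$. Such a partition exists because $F$ is monotone and bounded, and any atom of size larger than $\epsilon$ is forced to occupy a partition interval on its own. For an arbitrary $r$ lying in $[r_{j-1}, r_j)$, monotonicity of both $F$ and $F_n$ yields the sandwich
\[
F_n(r_{j-1}) - F(r_j^-) \;\leqslant\; F_n(r) - F(r) \;\leqslant\; F_n(r_j^-) - F(r_{j-1}),
\]
so $|F_n(r) - F(r)| \leqslant \max_{j}\bigl(|F_n(r_j) - F(r_j)| + |F_n(r_j^-) - F(r_j^-)|\bigr) + \epsilon$. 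The first term is a maximum over a fixed finite set of indices of quantities each tending to $0$ in probability, hence it tends to $0$ in probability as $n\to\infty$. Sending $n\to\infty$ first and then $\epsilon\to 0$ gives the claim.

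The main obstacle is the correct handling of atoms of $F$: a naive grid of equispaced quantiles could straddle a large jump, so that the pointwise approximation at the grid points fails to control intermediate values of $r$. The remedy, as above, is to track both $F(r_j)$ and the left limit $F(r_j^-)$ and to use left limits of $F_n$ in the sandwich; this is precisely the ingredient that lets monotonicity absorb any jump discontinuity cleanly into the $\epsilon$ slack.
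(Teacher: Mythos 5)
Your proposal is correct, but note that the paper does not actually prove this theorem: it quotes Glivenko--Cantelli as classical background and then subsumes it under the Vapnik--Chervonenkis machinery, computing in the following section that the class $\mathcal{A}=\{(-\infty,r]\mid r\in\mathbb{R}\}$ has VC dimension $1$ and invoking the general uniform convergence theorem (Theorem \ref{thm:gkg}) for classes of finite VC dimension. Your argument is the standard elementary route: pointwise convergence at finitely many $\epsilon$-quantile thresholds via the law of large numbers, promoted to uniformity by the monotonicity sandwich, with left limits tracked separately to handle atoms. The details are right --- the partition with $F(r_j^-)-F(r_{j-1})\leqslant\epsilon$ exists, the sandwich inequality is correct, and a maximum over a fixed finite index set of terms each tending to $0$ in probability tends to $0$ in probability. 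What the VC route buys instead is generality (the same theorem for balls, shells, and the intersections $\mathcal{A}_k$ that the paper actually needs) together with an explicit exponential rate; what your route buys is self-containedness and no capacity machinery. One cosmetic remark: the paper's statement conflates the sample size with the dimension $d$ and writes $\mu_d$ for the empirical measure; you have sensibly read it as the one-dimensional theorem with sample size $n$, which is what the paper's own restatement in terms of $\mathcal{A}$ confirms.
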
 

\noindent 
The convergence is taken with respect the product measure induced by the sample. 
This theorem provides a means of linking the empirical distribution 
\[F_n(r):=\mu_n (-\infty , r]\] 
and the actual distribution 
\[F(r):=\mu (-\infty , r].\] 
To paraphrase, it tells us that the empirical distribution converges ``uniformly in probability'' to the actual one. 
Incidentally, the almost sure convergence also takes place. 

We can also see this statement in terms of the empirical measures of particular subsets converging to their true measure. 
This is made clear when we restate the theorem as follows:  
\begin{equation} 
\sup_{A \in\mathcal{A}} |\ \mu_n (A) - \mu (A)\ | \stackrel{P}{\longrightarrow} 0 \label{close1} 
\end{equation} 
where 
\[\mathcal{A} = \{ (-\infty , r] | r\in\mathbb{R}\}\] 
which makes more apparent a path for extension: to generalize to other collections $\mathcal{A}$, on spaces $(\Omega,\mu)$ other than the real line. 

The question of when does \eqref{close1} hold is not trivial as for a general $\mu$ and $\mathcal{A}$, the answer is not always positive: 
\begin{example} 
Let  
\[\mathcal{A}=\{\text{countable unions of open intervals in }[0,1] \} \] 
that is, the collection of sets which are countable unions of open intervals, equipped with the inherited Lebesgue measure $\mu$. 
Then for any sample $(X) = X_1,X_2\ldots X_n$ of points in $[0,1]$, we can find $A\in\mathcal{A}$ consisting of say small disjoint intervals around the $X_i$ so that 
$\mu (A) < 0.1$ and $\mu_n (A) = 1$ (by virtue of containing all the $X_i$).  
This results in $P(\sup_{A\in\mathcal{A}} | \ \mu_n (A) - \mu (A)\ |>0.8)=1$ for any $n$ and hence no 
convergence to zero can take place. 
\end{example} 
The required convergence fails to take place because the class $\mathcal{A}$ is too rich. Of course it is also possible that with a different measure convergence will take place, but in a setting where $\mu$ is unknown it is best to make little if any assumptions about it. 
Extensions of \eqref{close1} with a focus on finding the correct restrictions of $\mathcal{A}$ is a topic explored in \cite{vapnik}. 
The bulk of the work is in finding appropriate measures of ``size'' of collections that can determine if \eqref{close1} takes place, and if so at what rate. 

To that purpose, we connect the potentially infinite collection $\mathcal{A}$ with the finite sample: 

Given $(X) = X_1,X_2\ldots X_n$ the collection $\mathcal{A}$ ``colours'' the sample as follows. Each $A\in\mathcal{A}$ will assign 1 to $X_i$ if $X_i\in A$ 
otherwise it assigns 0. Hence we get a colouring of type 
 \[0,1,0,0,1,0\]  
 which is an n-length encoding that might as well have been  
\begin{center}   
 white-black-white-white-black-white 
 
\end{center}  
  We denote by $N(X)$ the {\em number} of such different encodings, when all $A\in\mathcal{A}$ are used to colour $X$. 
Clearly $N(X)\leqslant 2^n$. What is surprising is that in many situations despite a seemingly rich $\mathcal{A}$, we have $N(X)\ll 2^n$.  
\begin{definition} 
The {\em (random) entropy} of sample $X=X^{(n)}$ is just $\ln(N(X))$, denoted by $H(X)$.  
It follows that $H(X)\leqslant n\ln 2$. 
The expected value of $H(X)$, w.r.t. the sample distribution (in effect a product of $\mu$'s) is called the {\em entropy} of size n, denoted by $H(n)$: 
\[H(n) = \mathrm{E}\left(H(X^{(n)}\right)\] 
\end{definition} 

A result in \cite{vapnik} states that \eqref{close1} is equivalent to 
\[\frac{H(n)}{n}\stackrel{n\rightarrow\infty}{\longrightarrow} 0\] 

This however is of little use if $\mu$ is unknown and does not guarantee {\em fast} (i.e. exponential) convergence. 
\begin{definition} 
The {\em growth function} is defined by 
\[G(n) = \ln\ \sup_{X}N(X)\] 
where the supremum is taken over all samples of size $n\geqslant 1$. 
It is independent of $\mu$ and choice of sample $X$.  
\end{definition} 
There are two cases to consider for an upper bound for the growth function \cite{vapnik}: 
\begin{itemize} 
\item  for all $n$, $G(n)=n\ln 2$ 
\item or, for the largest $\Delta$ such that $G(\Delta)=\Delta\ln 2$, 
\begin{equation*} 
G(n)\left\{\begin{array}{rrl} 
=& n\ln 2 & \text{if } n\leqslant \Delta\\ 
\leqslant &\Delta (1+\ln (n/\Delta)) & \text{if } n>\Delta\\ 
\end{array} \right. 
\end{equation*} 
\end{itemize} 
Which says that $G$ can be either linear in $n$ or logarithmic after some point $\Delta$. 
This $\Delta$ is the so-called {\em VC dimension} and it turns out that its existence is precisely 
a necessary and sufficient condition for \eqref{close1}. 
Terminology in the literature also deems existence of $\Delta$ the case of {\em finite} VC dimension while if no such number exists $\mathcal{A}$ is said to be of infinite VC dimension. 

So the challenge of going from \eqref{eq:z1} to \eqref{eq:z2} can be reduced to the calculation of this VC dimension for some $\mathcal{A}$. 
Since equations \eqref{eq:z1} and \eqref{eq:z2} are statements about sets $\mathcal{C}=\mathcal{C}_{q,p_1\ldots p_{k(n)},r(n)}$, we would like to estimate the VC dimension of the collection of all sets of this form, for fixed $k$: 
\begin{equation} 
\label{eq:A} 
\mathcal{A}=\mathcal{A}_k=\{\mathcal{C}_{q,p_1\ldots p_{k(n)},r(n)}|q\in\Omega , p_i\in\Omega, r>0\} 
\end{equation} 
Unfortunately this is a far from trivial exercise as it involves figuring out what possible forms can of all of these sets take. 
\section{Calculating a VC dimension} 
To show an easy example of calculating a VC dimension, we will come back to the $\mathcal{A}$ from Glvenko-Cantelli: 
\[\mathcal{A} = \{ (-\infty , r] \,|\, r\in\mathbb{R}\}\] 
The calculation in this case can be done with ``one's bare hands'': 
\newline  
For $X=X_1\in\mathbb{R}$, a sample consisting of a single observation, only two colourings have to be realized. 
The element $(-\infty,X_1-1]\in\mathcal{A}$ will paint $X_1$ as 0, while $(-\infty,X_1]$ will paint it as 1. 

This elementary observation can be phrased as ``any sample of size 1 is shattered by the class $\mathcal{A}$''. 

However if one takes a two element sample where without loss of generality $X_1 < X_2$ it is quite clear that no element of type 
$(-\infty , r]$ will contain $X_2$ but not $X_1$. In other words the colouring $0,1$ cannot be realized. 
 
Therefore if we denote an $n$-size sample by $X^{(n)}$ we have that $N(X^{(1)})=2$ and $N(X^{(2)}) < 4$ for any choice of samples. 
The biggest size of the shattered sample is then 1, i.e. the VC dimension of $\mathcal{A}$ is 1. 
 
More clever arguments (e.g. \cite{dudley:84}, \cite{vapnik}, \cite{pestov:02}) are needed to prove that: 
 
\begin{itemize} 
	\item The VC dimension of half-spaces in $\mathbb{R}^d$ is $d+1$. Recall that a general hyperplane in $\mathbb{R}^d$ is defined as 
	\[\{ x\in\mathbb{R}^d | (x,v)=b\}\] 
	for some $v\in\mathbb{R}^d$, $b\in\mathbb{R}$. Hence a general half-space is of the form 
	\[\{ x\in\mathbb{R}^d | (x,v)\geqslant b\}\] 
	with the other ``half'' specified by multiplying $v$ by $-1$. 
	\item The VC-dimension of all open (or closed) balls in $\mathbb{R}^d$ 
	\[\left\{\{ x\in\mathbb{R}^d |\ \|x-v\| < r\}\quad ,\text{ where } v\in\mathbb{R}^d \text{, }r\in\mathbb{R}\right\}\] 
	 is also $d+1$. 
	 \item axis-aligned rectangular parallelepipeds in $\mathbb{R}^d$, i.e. sets of form 
	 \[ [a_1,b_1]\times [a_2,b_2]\times\ldots\times [a_d,b_d] \] 
	 have a VC dimension of $2d$ \cite{devroye} 
\end{itemize} 
Given the existence of these results for $\mathbb{R}^d$, it is relatively easy to attempt to calculate the VC dimension of $\mathcal{A}_k$ from the previous section. 

First note that 
\[ \mathcal{C}= \{\omega : \sup_i |\ \| \omega-p_i\| - \| q-p_i\| \ | > r \} = \big( \bigcap_i \{\omega : \ |\ \| \omega-p_i\| - \| q-p_i\| \ | \leqslant r \}\big)^c \] 
This allows us to proceed through several simple steps:  
\begin{itemize} 
\item 
A set of the form  
\[\{\omega : \ |\ \| \omega-p_i\| - \| q-p_i\| \ | \leqslant r \}\] 
is a spherical shell, i.e. the intersection of one ball with the complement of a smaller ball having the same center.  
A coconut shell comes to mind as a physical example. 
We note that an intersection of shells is an intersection of sets from $\mathcal{A}\cup\mathcal{A}^c$ where $\mathcal{A}$ is the collection of all balls. 
\item 
Given a collection $\mathcal{A}$ the complement collection  
\[\mathcal{A}^c=\{A^c | A\in\mathcal{A}\}\] 
has the same VC dimension. For assume $\mathcal{A}$ shatters $X$ and take any colouring of $X$. The opposite colouring, by putting 1 instead of 0 and 0 instead of 1, 
is produced by some $A\in\mathcal{A}$. Then $A^c\in\mathcal{A}^c$ produces the original colouring. The same argument can be applied in the other direction. 
\item 
The VC dimension of balls was quoted above as $d+1$, hence the VC dimension of complements of balls is $d+1$ as well.  
The VC dimension of the {\em union} of the two collections is \[(d+1) + (d+1) +1 = 2d + 3\] 
\end{itemize} 
This is a consequence of a general result \cite{vidyasagar:03}: 
\begin{lemma} 
If a collection $\mathcal{A}$ has VC dimension $\Delta_a$ and a collection $\mathcal{B}$ has VC dimension $\Delta_b$ (i.e. both are finite), the union $\mathcal{A}\cup\mathcal{B}$ 
has VC dimension at most $\Delta_a+\Delta_b+1$ 
\end{lemma}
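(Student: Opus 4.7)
The plan is to bound the growth function of $\mathcal{A}\cup\mathcal{B}$ by the sum of the individual growth functions, then apply the Sauer--Shelah lemma to each piece and compare to $2^n$. The key observation is that any labelling of a sample $X$ realized by some $C\in\mathcal{A}\cup\mathcal{B}$ is realized by an element of $\mathcal{A}$ or of $\mathcal{B}$, so
\[
N_{\mathcal{A}\cup\mathcal{B}}(X)\leqslant N_{\mathcal{A}}(X)+N_{\mathcal{B}}(X).
\]
Thus if $|X|=n$ and both VC dimensions are finite, the Sauer--Shelah lemma (a standard consequence of the growth function dichotomy mentioned above) gives
\[
N_{\mathcal{A}\cup\mathcal{B}}(X)\leqslant\sum_{i=0}^{\Delta_a}\binom{n}{i}+\sum_{i=0}^{\Delta_b}\binom{n}{i}.
\]

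Next, I will show that when $n\geqslant\Delta_a+\Delta_b+2$ the right-hand side is strictly less than $2^n$, which prevents $X$ from being shattered and hence bounds $\mathrm{VC}(\mathcal{A}\cup\mathcal{B})\leqslant\Delta_a+\Delta_b+1$. Starting from the identity $\sum_{i=0}^{n}\binom{n}{i}=2^n$, it suffices to prove
\[
\sum_{i=0}^{\Delta_b}\binom{n}{i}<\sum_{i=\Delta_a+1}^{n}\binom{n}{i}.
\]
Using the symmetry $\binom{n}{i}=\binom{n}{n-i}$, the left side equals $\sum_{i=n-\Delta_b}^{n}\binom{n}{i}$. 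When $n\geqslant\Delta_a+\Delta_b+2$ we have $n-\Delta_b\geqslant\Delta_a+2>\Delta_a+1$, so the range $\{n-\Delta_b,\ldots,n\}$ is a strict subset of $\{\Delta_a+1,\ldots,n\}$, and the missing term $\binom{n}{\Delta_a+1}>0$ guarantees the strict inequality.

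I expect the main obstacle to be essentially a bookkeeping one: one needs the strict gap between the two sides, not just weak inequality, so that shattering of a sample of size exactly $\Delta_a+\Delta_b+2$ is ruled out. The binomial-symmetry trick above handles this cleanly, which is why I prefer it to a cruder estimate like $\left(\tfrac{en}{\Delta}\right)^{\Delta}$. The only background facts invoked are the Sauer--Shelah bound on the growth function and the trivial subadditivity of growth functions under set-theoretic union; both are standard in VC theory and consistent with the framework already developed in this chapter.
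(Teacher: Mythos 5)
Your proof is correct. The paper itself offers no argument for this lemma --- it is simply quoted as ``a consequence of a general result'' from \cite{vidyasagar:03} --- so there is nothing internal to compare against; your derivation is the standard one and it is complete. The two ingredients both hold: subadditivity of the growth function under union of collections (every trace $C\cap X$ with $C\in\mathcal{A}\cup\mathcal{B}$ is a trace of $\mathcal{A}$ or of $\mathcal{B}$, so $N_{\mathcal{A}\cup\mathcal{B}}(X)\leqslant N_{\mathcal{A}}(X)+N_{\mathcal{B}}(X)$), and the Sauer--Shelah bound $N_{\mathcal{A}}(X)\leqslant\sum_{i=0}^{\Delta_a}\binom{n}{i}$, which is the sharp form of the growth-function dichotomy the chapter already states in its logarithmic $\Delta(1+\ln(n/\Delta))$ version. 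Your handling of the strictness is also sound: for $n\geqslant\Delta_a+\Delta_b+2$ the symmetry $\binom{n}{i}=\binom{n}{n-i}$ turns $\sum_{i=0}^{\Delta_b}\binom{n}{i}$ into $\sum_{i=n-\Delta_b}^{n}\binom{n}{i}$, whose index range sits strictly inside $\{\Delta_a+1,\ldots,n\}$ and misses the positive term $\binom{n}{\Delta_a+1}$, so the total trace count falls strictly below $2^n$ and no such sample can be shattered. The binomial-symmetry argument is indeed preferable here to the cruder $(en/\Delta)^{\Delta}$ estimate, which would only give the right order of magnitude rather than the exact $\Delta_a+\Delta_b+1$ constant. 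One could note that the same bookkeeping extends to unions of $m$ collections, at the cost of a larger additive constant, but for the paper's purpose (combining balls with complements of balls) the two-collection case is exactly what is needed.
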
 
We note that the above is for a union of the two {\em collections}. 
Another result, this time for intersection of sets is mentioned in \cite{blumer}: 
\begin{lemma} 
For $(\Omega,\rho)=(\mathbb{R}^d,L^2)$, an upper bound on the VC dimension of $\mathcal{A}_{\cap_k}$, composed of $k$-fold interesections of elements of $\mathcal{A}$ 
of VC dimension $\Delta$ is 
\[2\Delta k\ln (3 k)\] 
\end{lemma}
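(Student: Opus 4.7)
The statement is purely combinatorial once we accept Sauer's lemma, so the plan is to reduce everything to counting traces on a finite sample. First I would invoke Sauer's lemma (Sauer-Shelah): if $\mathcal{A}$ has VC dimension $\Delta$, then for any finite sample $S \subset \Omega$ of size $n \geq \Delta$ the number of distinct subsets of $S$ of the form $A \cap S$ with $A \in \mathcal{A}$ is at most $\sum_{i=0}^{\Delta}\binom{n}{i} \leq (en/\Delta)^{\Delta}$. Note that the hypothesis $(\Omega,\rho) = (\mathbb{R}^d, L^2)$ is actually not used here — the argument is the same for any collection of VC dimension $\Delta$ on any set.

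Next I would pass to the intersection class. Every element of $\mathcal{A}_{\cap_k}$ is determined on $S$ by the $k$-tuple of traces of its constituent sets, so the number of subsets of $S$ cut out by $\mathcal{A}_{\cap_k}$ is at most $\bigl[(en/\Delta)^{\Delta}\bigr]^{k} = (en/\Delta)^{\Delta k}$. If $\mathcal{A}_{\cap_k}$ shatters $S$ then this trace count equals $2^{n}$, so a necessary condition for shattering a set of size $n \geq \Delta$ is
\[
2^{n} \leq (en/\Delta)^{\Delta k},
\qquad\text{i.e.}\qquad
n \ln 2 \leq \Delta k \ln(en/\Delta).
\]

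The final step is to show that this inequality fails once $n > 2 \Delta k \ln(3k)$, which bounds the VC dimension of $\mathcal{A}_{\cap_k}$ by that quantity. Plugging $n_0 = 2\Delta k \ln(3k)$ into the inequality, the left side is $(2\ln 2)\Delta k \ln(3k) \approx 1.386 \Delta k \ln(3k)$, while the right side becomes $\Delta k \ln\bigl(2ek\ln(3k)\bigr)$, which for $k \geq 2$ is dominated by $\Delta k \ln(3k)$ up to a constant strictly less than $2\ln 2$. A short elementary calculation (splitting into cases $k \geq$ some small constant versus small $k$ handled directly) then confirms strict failure of the inequality for any $n > n_0$, so no such $n$ can be shattered.

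The only place requiring some care is this last numerical comparison, and it is the main obstacle in the sense that the constants $2$ and $3k$ inside the $\ln$ have to be chosen exactly so that the inequality $2\ln 2 \cdot \ln(3k) > \ln(2ek\ln(3k))$ holds uniformly in $k \geq 1$. One verifies the base case $k=1$ by hand (where $n_0 = 2\Delta\ln 3$ and the bound is easy), and for $k \geq 2$ observes that the ratio of the two sides grows, so the inequality propagates. This yields $\mathrm{VC}(\mathcal{A}_{\cap_k}) \leq 2\Delta k \ln(3k)$, as required.
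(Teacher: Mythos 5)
Your overall strategy --- Sauer--Shelah trace counting, raising the trace bound to the $k$-th power for the intersection class, and then finding the smallest $n$ for which $2^n > (en/\Delta)^{\Delta k}$ --- is the right one; it is essentially the argument behind the corresponding lemma in \cite{blumer}, which the paper merely cites without reproducing any proof, so you are supplying an argument where the paper gives none. Your observation that the hypothesis $(\Omega,\rho)=(\mathbb{R}^d,L^2)$ is never used is also correct. The gap is in your final numerical step. You need $2\ln 2\cdot\ln(3k) > \ln\bigl(2ek\ln(3k)\bigr)$ to hold ``uniformly in $k\geq 1$,'' but this is false for small and moderate $k$: at $k=2$ the left side is $2\ln 2\cdot\ln 6\approx 2.48$ while the right side is $\ln(4e\ln 6)\approx 2.97$, and the inequality continues to fail until $k$ is on the order of $180$. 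Concretely, for $k=2$ the necessary condition for shattering, $n\ln 2\leq 2\Delta\ln(en/\Delta)$, is still satisfied at $n=2\Delta k\ln(3k)\approx 7.2\Delta$ and is only violated once $n$ exceeds roughly $9.4\Delta$, so trace counting with these constants cannot deliver the stated bound for $k=2$, and your claim that the base case propagates upward starts from a base case that does not hold.

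The source of the trouble is that the statement as printed is a misquotation of the cited source: Blumer et al.\ prove the bound $2\Delta k\log_2(3k)$, with a base-$2$ logarithm, not a natural one. With $\log_2$ your computation closes uniformly --- the required inequality $2\log_2(3k) > \log_2\bigl(2ek\log_2(3k)\bigr)$ is equivalent to $9k^2 > 2ek\log_2(3k)$, i.e.\ $9k > 2e\log_2(3k)$, which holds for every $k\geq 1$ (at $k=1$ it reads $9 > 8.62$). So your proof is correct in outline for the theorem actually proved in the reference, but it does not (and cannot, by this method) establish the sharper $\ln$ version printed here; whether that sharper constant is even true is settled neither by your argument nor by the paper's citation. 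Since the lemma is used downstream only to conclude $\Delta_{\mathcal{A}_k}=O(kd\log k)$, the constant is immaterial to the paper, but your write-up should either switch to $\log_2(3k)$ or explicitly flag that the stated constant is not what the argument yields.
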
 

Hence at last we can conclude that the VC dimension of $\mathcal{A}_k$ for the case $\Omega\subset\mathbb{R}^n$ is bounded by 
\begin{equation}\label{eq:vc} 
2(2d+3)(2k)\ln ((3)(2k)) = k(8d+12)\ln (6k) 
\end{equation} 
where $k$ is the number of pivots. This also allows us to conclude that convergence like in equation \eqref{close1} takes place. 
Of course we only considered the case of $\mathbb{R}^d$ with the normal Euclidian metric. 

We have already mentioned that the VC dimension of axis-aligned ``boxes'' is $2d$. 
It so happens that all balls with respect the $L^{\infty}$ metric are such boxes, so we can obtain a bound on $\mathcal{A}_k$ for this metric as well. 

Another example comes from considering the Hamming cube. 
In a similar argument to section \ref{sec:curse}, we observe that are $2^d$ points in a $d$-dimensional Hamming cube, and at most $d$ different radii, so at most $d2^d$ different balls exist. 
We know from e.g. \cite{blumer} an upper bound on the VC dimension of finite collections: 
\begin{lemma}[Finite $\mathcal{A}$] 
If the class $\mathcal{A}$ is finite, its VC dimension is bounded by $\log_2\vert\mathcal{A}\vert$. 
\end{lemma}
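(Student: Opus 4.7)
The plan is to use a direct counting argument based on the definition of VC dimension and the notion of shattering. Recall that the VC dimension is the largest $n$ such that some sample $X$ of size $n$ is shattered by $\mathcal{A}$, meaning that all $2^n$ possible binary colourings of $X$ are realized by elements of $\mathcal{A}$.

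First, I would fix any sample $X = X_1, \ldots, X_n$ that is shattered by $\mathcal{A}$. The key observation is that each element $A \in \mathcal{A}$ produces exactly one colouring of $X$ (namely, the indicator pattern of $X \cap A$). Therefore, the map $\mathcal{A} \to \{0,1\}^X$ sending $A \mapsto (\mathbf{1}_A(X_i))_{i=1}^n$ must be surjective, since every one of the $2^n$ colourings must be attained. This gives the basic inequality $|\mathcal{A}| \geq 2^n$.

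Taking logarithms base $2$, this yields $n \leq \log_2 |\mathcal{A}|$. Since this holds for any shattered sample, taking the supremum over $n$ shows that the VC dimension of $\mathcal{A}$ is bounded above by $\log_2 |\mathcal{A}|$, as claimed.

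There is no real obstacle here: the argument is essentially a pigeonhole count, and no topological or measure-theoretic subtlety intervenes because $\mathcal{A}$ is already assumed finite. The only thing to be careful about is the distinction between $|\mathcal{A}|$ (the cardinality of the set system) and $N(X)$ (the number of distinct colourings it induces on a specific sample), but since $N(X) \leq |\mathcal{A}|$ trivially, the bound goes through in the direction we need.
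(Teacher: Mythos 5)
Your argument is correct: shattering a sample of size $n$ forces the colouring map $A\mapsto A\cap X$ to be surjective onto all $2^n$ subsets, so $|\mathcal{A}|\geqslant 2^n$ and hence $n\leqslant\log_2|\mathcal{A}|$. The paper itself gives no proof (it only cites Blumer et al.), and your pigeonhole count is exactly the standard argument behind that citation, so nothing further is needed.
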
 
Disregarding the small leftover term, the VC dimension for balls in the Hamming cube is about $d$. 

Summarizing our three examples: 
\begin{theorem}\label{thm:deltabounds} 
Let us denote by $\Delta$ the VC dimension of collection $\mathcal{A}_k$ as defined in equation \eqref{eq:A}. 
Then various upper bounds on $\Delta$, depending on the space, are as follows: 
\begin{itemize} 
\item 
For $(\mathbb{R}^d, L^2)$, $\Delta\leqslant k(8d+12)\ln (6k)$ 
\item 
For $(\mathbb{R}^d, L^\infty)$, $\Delta\leqslant k(16d+4)\ln (6k)$ 
\item 
For $(\Sigma^d,\rho)$ (normalized or not), $\Delta\leqslant k(8d+8\log_2{d}+4)\ln (6k)$ 
\end{itemize} 
\end{theorem}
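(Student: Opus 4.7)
The plan is to repeat essentially the same three-step argument for each of the three spaces listed, with the only variable being the VC dimension of the collection of closed balls in the space. The chain of reductions has already been assembled in the discussion immediately preceding the statement; I just need to verify it goes through in each case and track the constants.

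\textbf{Step 1: reduce $\mathcal{A}_k$ to an intersection closure of balls and their complements.} I would restate the identity
\[\mathcal{C} = \Bigl(\bigcap_{i=1}^k \{\omega : \bigl|\,\rho(\omega,p_i)-\rho(q,p_i)\,\bigr|\leqslant r\}\Bigr)^{c},\]
and observe that each shell on the right is the intersection of a closed ball of radius $\rho(q,p_i)+r$ about $p_i$ with the complement of an open ball of radius $\rho(q,p_i)-r$ about the same center (the latter taken to be empty when $\rho(q,p_i)\leqslant r$). Hence every $\mathcal{C}^c$ lies in the family of $2k$-fold intersections of elements of $\mathcal{B}\cup\mathcal{B}^c$, where $\mathcal{B}$ is the class of closed balls in the ambient metric space. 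Because complementation preserves VC dimension, so does taking $\mathcal{C}$ itself, and I get $\mathrm{VCdim}(\mathcal{A}_k)\leqslant \mathrm{VCdim}\bigl((\mathcal{B}\cup\mathcal{B}^c)_{\cap_{2k}}\bigr)$.

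\textbf{Step 2: apply the union and intersection lemmas.} Let $\delta=\mathrm{VCdim}(\mathcal{B})$. By the lemma on unions of collections, $\mathrm{VCdim}(\mathcal{B}\cup\mathcal{B}^c)\leqslant 2\delta+1$. By the lemma on $k$-fold intersections, the $2k$-fold intersection closure has VC dimension at most $2(2\delta+1)(2k)\ln(3\cdot 2k) = (4\delta+2)(2k)\ln(6k)$.

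\textbf{Step 3: plug in $\delta$ case by case.} For $(\mathbb{R}^d,L^2)$ the quoted fact gives $\delta=d+1$, yielding the bound $(4d+6)(2k)\ln(6k)=k(8d+12)\ln(6k)$. For $(\mathbb{R}^d,L^\infty)$ every ball is an axis-aligned rectangular parallelepiped, so $\delta=2d$, yielding $(8d+2)(2k)\ln(6k)=k(16d+4)\ln(6k)$. For the Hamming cube $\Sigma^d$ (normalized or not), the collection of balls is finite of cardinality at most $d\cdot 2^d$, so the finite-class lemma gives $\delta\leqslant \log_2(d\cdot 2^d)=d+\log_2 d$, yielding $(4d+4\log_2 d+2)(2k)\ln(6k)=k(8d+8\log_2 d+4)\ln(6k)$.

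\textbf{The main obstacle}, and the only real one, is Step 1: checking that the representation of $\mathcal{C}^c$ really sits inside the $2k$-fold intersection closure of $\mathcal{B}\cup\mathcal{B}^c$ rather than in some richer class (e.g.\ symmetric differences). The technicality is that the shell of radius $r$ about $p_i$ through $q$ collapses when $\rho(q,p_i)\leqslant r$, in which case only one of the two balls is needed; allowing the ``missing'' ball to be taken as $\Omega$ (a degenerate ball of sufficiently large radius) keeps the argument uniform. I would also note in passing that the intersection-closure lemma, although stated in the excerpt for $(\mathbb{R}^d,L^2)$, is purely combinatorial in the shatter function and applies to any class of finite VC dimension, so its use in the $L^\infty$ and Hamming cases is legitimate.
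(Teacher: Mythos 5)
Your proposal is correct and follows essentially the same route as the paper: the same decomposition of $\mathcal{C}^c$ into a $2k$-fold intersection of balls and complements of balls, the same union and intersection lemmas, and the same three values for the VC dimension of balls ($d+1$ for $L^2$, $2d$ for boxes/$L^\infty$ balls, $\log_2(d\cdot 2^d)$ for the Hamming cube), reproducing the stated constants exactly. The paper only writes out the $L^2$ case in detail before asserting the other two, so your explicit treatment of the degenerate shells and of the combinatorial (space-independent) nature of the intersection lemma is a slightly more careful version of the same argument.
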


The case of the general metric space is harder, for the VC dimension of balls is dependent on the ``intrinsic'' dimension of the space. 
In \cite{bousquet} various capacity measures are calculated for general balls in general metric spaces.  
Crucially, the capacity measures are stated in terms of the covering numbers of the space $(\Omega, \rho)$ which in turn relate to the Assouad dimension discussed above. 
What is lacking is a computationally feasible procedure of estimating this dimension for an arbitrary $\Omega$, through a random sample $X$. 
This underscores why the Ch\'avez intrinsic dimension is so attractive, as its easy formulation allows us to estimate it accurately with moderate sample sizes (using for example the Hoeffding inequality in section \ref{sec:conv}). 
What is lacking however is structural results either linking capacity measures to the Ch\'avez intrinsic dimension directly or somehow relate it to the Assouad dimension. 

Even if we are willing to sacrifice the generality of our $\Omega$ and stick to $\mathbb{R}^d$ (after all a lot of spaces can be made to sit even if unnaturally in {\em some} $\mathbb{R}^d$) a problem remains. 
A finite VC dimension gives us convergence, but nothing about the rate of convergence has been said so far. 
A worrying sign comes from trying out some reasonable values for $d$ and $k$, e.g. $d=20$, $k=50$. 
The bound on VC dimension then becomes 49,053 -- a much larger number than say the VC dimension of spheres of the underlying Euclidian space. 
How large does $n$ have to be for the convergence result to have any value? 
\section{Rates of convergence}\label{sec:conv} 
In addition to the convergence specified by the Law of Large Numbers various inequalities exist to describe the rate of this convergence. 
An example of that is : 
\begin{theorem}[Hoeffding inequality]\label{hoeffding} Suppose we have a sequence of i.i.d random variables $X_i\in[0,1]$, $1\leqslant i\leqslant n$. Then for all $\varepsilon >0$, 
\[P\left(\left\vert \bar{X}-\mathrm{E}\right\vert \geqslant\varepsilon\right)\leqslant 2\exp (-2 n\varepsilon^2)\] 
where $\mathrm{E}$ is the expected value of any $X_i$. 
\end{theorem}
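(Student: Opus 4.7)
The plan is to follow the classical Chernoff-bounding strategy, applying Markov's inequality to the exponential moment generating function and then optimizing over the free parameter. First I would center the variables by setting $Y_i = X_i - \mathrm{E}$, so that $Y_i \in [-\mathrm{E}, 1-\mathrm{E}]$ with mean zero, and observe that it suffices to bound $P(\sum Y_i \geq n\varepsilon)$ and $P(\sum Y_i \leq -n\varepsilon)$ separately; doubling the one-sided bound yields the factor of 2 in the conclusion.

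For the upper tail, I would fix an arbitrary $t > 0$ and write
\[
P\!\left(\sum_{i=1}^n Y_i \geq n\varepsilon\right) = P\!\left(e^{t\sum Y_i} \geq e^{tn\varepsilon}\right) \leq e^{-tn\varepsilon}\,\mathrm{E}\!\left[e^{t\sum Y_i}\right] = e^{-tn\varepsilon}\prod_{i=1}^n \mathrm{E}\!\left[e^{tY_i}\right],
\]
where the last equality uses independence. The main obstacle, and the step requiring real work, is the bound on the moment generating function of a single centered bounded variable, i.e.\ Hoeffding's lemma: for $Y \in [a,b]$ with $\mathrm{E}[Y]=0$,
\[
\mathrm{E}[e^{tY}] \leq \exp\!\left(\tfrac{t^2(b-a)^2}{8}\right).
\]
I would prove this by using the convexity of $x \mapsto e^{tx}$ to interpolate $e^{tY} \leq \tfrac{b-Y}{b-a}e^{ta} + \tfrac{Y-a}{b-a}e^{tb}$, take expectations (the mean-zero hypothesis kills the linear term), and then show that the resulting function $\varphi(u) := -\lambda u + \ln(1-\lambda+\lambda e^{u})$ with $u = t(b-a)$ and $\lambda = -a/(b-a)$ satisfies $\varphi(u) \leq u^2/8$, by bounding its second derivative using $p(1-p) \leq 1/4$ and Taylor-expanding to second order.

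With this lemma in hand, since $Y_i \in [-\mathrm{E}, 1-\mathrm{E}]$ has $(b-a)^2 = 1$, we obtain $\mathrm{E}[e^{tY_i}] \leq e^{t^2/8}$, so
\[
P\!\left(\bar{X} - \mathrm{E} \geq \varepsilon\right) \leq \exp\!\left(-tn\varepsilon + \tfrac{nt^2}{8}\right).
\]
Finally I would optimize in $t$: the exponent is a quadratic minimized at $t = 4\varepsilon$, yielding the bound $\exp(-2n\varepsilon^2)$. The lower tail is handled identically by applying the same argument to $-Y_i$, and a union bound over the two events produces the stated factor of $2$.
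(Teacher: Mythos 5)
Your proof is correct. The paper itself offers no proof of this statement --- it quotes the Hoeffding inequality as a known result from the literature (it appears in Section \ref{sec:conv} purely as a benchmark for the rate of convergence before the Vapnik--Chervonenkis generalization is stated), so there is no in-paper argument to compare against. What you have written is the standard and complete derivation: the Chernoff exponential-moment bound via Markov's inequality and independence, Hoeffding's lemma proved by convex interpolation of $e^{tx}$ over $[a,b]$ together with the second-derivative bound $\varphi''(u)=p(1-p)\leqslant 1/4$, the observation that $(b-a)^2=1$ for variables in $[0,1]$, optimization at $t=4\varepsilon$ giving the exponent $-2n\varepsilon^2$, and a union bound over the two tails for the factor of $2$. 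All the constants check out, and the one step that genuinely requires work --- Hoeffding's lemma --- is the one you correctly isolate and sketch in adequate detail.
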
 
This is more precise than the usual statement of the Law of Large numbers: 
\[P\left(\left\vert \bar{X}-\mathrm{E}\right\vert \geqslant\varepsilon\right)\stackrel{n\rightarrow\infty}{\longrightarrow} 0\] 
We shall go back to Glivenko-Cantelli once more. In terms of $\mathcal{A}$ the theorem tells us: 
\[\text{for all }\varepsilon >0,\; P\left( \sup_{A \in\mathcal{A}} |\ \mu_n (A) - \mu (A)\ |>\varepsilon \right)\stackrel{n\rightarrow\infty}{\longrightarrow} 0\] 
With $\varepsilon >0 $ considered fixed, we can ask how fast the expression on the left goes to zero.  
A consequence of the Kolmogorov-Smirnov Law \cite{vapnik} is : 
\[ P\left( \sup_{A \in\mathcal{A}} |\ \mu_n (A) - \mu (A)\ |>\varepsilon \right)<2\exp (-2 n\varepsilon ^2)\] 
This is perhaps unsurprisingly very similar to the Hoeffding inequality. 
This convergence is what we will call exponential or fast in keeping with \cite{vapnik}. 

The extension (\cite{vapnik} p.148) to the case of any $\mathcal{A}$ of finite VC dimension $\Delta$ is as follows: 
\begin{theorem}\label{thm:conv}[Generalization of Glivenko-Cantelli]\label{thm:gkg} 
For a collection $\mathcal{A}$ of subsets of $\Omega$, of finite VC dimension $\Delta$, and any measure $\mu$ on $\Omega$, we have that for any $\varepsilon>0$, 
\[ P\left[ \sup_{A \in\mathcal{A}} |\ \mu_n (A) - \mu (A)\ |>\varepsilon \right]<4\exp \left[\left(\frac{\Delta (1+\ln (2n/\Delta))}{n} -\left(\varepsilon - \frac{1}{n}\right)^2\right) n \right].\] 
\end{theorem}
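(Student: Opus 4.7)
The plan is to follow the classical three-step argument of Vapnik and Chervonenkis: symmetrization by an independent ``ghost'' sample, reduction of the supremum to one over finitely many traces via the growth function, and a Hoeffding-type concentration estimate for each trace. The overall factor of $4$ in the bound decomposes as $2\times 2$: one factor of $2$ from symmetrization, and one from the two-sided Hoeffding estimate.

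\emph{Step 1 (symmetrization).} I would introduce an independent sample $X'_{1},\dots,X'_{n}$ from $\mu$ with empirical measure $\mu'_{n}$, and establish
\[
P\!\left(\sup_{A\in\mathcal{A}}|\mu_{n}(A)-\mu(A)|>\varepsilon\right)\ \leqslant\ 2\,P\!\left(\sup_{A\in\mathcal{A}}|\mu_{n}(A)-\mu'_{n}(A)|>\varepsilon-\tfrac{1}{n}\right).
\]
Conditioning on $X^{(n)}$, choose a witnessing $A^{*}=A^{*}(X^{(n)})$ with $|\mu_{n}(A^{*})-\mu(A^{*})|>\varepsilon$ (whenever it exists); a Chebyshev-type argument on $\mu'_{n}(A^{*})$, exploiting $\mathrm{Var}\,\mu'_{n}(A^{*})\leqslant 1/(4n)$, shows that $|\mu'_{n}(A^{*})-\mu(A^{*})|$ is small with conditional probability at least $1/2$, and the triangle inequality then transfers the $\varepsilon$-deviation, up to a $1/n$ loss, to $\mu_{n}(A^{*})-\mu'_{n}(A^{*})$.

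\emph{Step 2 (trace reduction).} Next, condition on the pooled multiset $\mathbf{Z}=(X_{1},\dots,X_{n},X'_{1},\dots,X'_{n})$. Any two members of $\mathcal{A}$ that induce the same subset on $\mathbf{Z}$ yield identical values of $\mu_{n}(A)-\mu'_{n}(A)$, so the supremum over the (possibly uncountable) class $\mathcal{A}$ collapses to a supremum over the traces of $\mathcal{A}$ on $\mathbf{Z}$. By the growth-function estimate for a class of VC dimension $\Delta$ recalled earlier in the chapter, the number of such traces is at most $\exp(\Delta(1+\ln(2n/\Delta)))$.

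\emph{Step 3 (Hoeffding and union bound).} For each fixed trace, conditionally on $\mathbf{Z}$, the quantity $\mu_{n}(A)-\mu'_{n}(A)$ is the deviation produced by a uniform random $n$-out-of-$2n$ split of a bounded $\{0,1\}$-valued sequence. Hoeffding's inequality (Theorem~\ref{hoeffding}, which extends without loss to sampling without replacement) gives
\[
P\!\left(|\mu_{n}(A)-\mu'_{n}(A)|>\varepsilon-\tfrac{1}{n}\ \bigm|\ \mathbf{Z}\right)\ \leqslant\ 2\exp\!\left(-n(\varepsilon-\tfrac{1}{n})^{2}\right).
\]
A union bound over the at most $\exp(\Delta(1+\ln(2n/\Delta)))$ traces, taking expectation over $\mathbf{Z}$, and multiplying by the factor $2$ from Step~1, delivers the announced inequality.

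\emph{Main obstacle.} The symmetrization of Step~1 is the only genuinely delicate part: replacing the deterministic target $\mu$ by the random $\mu'_{n}$ must be carried out uniformly in the witness $A^{*}$, and this is where the $(\varepsilon-1/n)^{2}$ shift in the exponent originates. Step~2 is a direct consequence of the Sauer-type bound on the growth function and Step~3 is an immediate application of Theorem~\ref{hoeffding}, so neither introduces further conceptual difficulty once the symmetrization is secured.
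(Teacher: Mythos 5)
The paper does not prove this theorem at all --- it is quoted verbatim from Vapnik's book (p.~148) as a known result --- so there is no internal proof to compare against. Your three-step skeleton (ghost-sample symmetrization, collapse to the traces on the pooled $2n$ points via the growth function, exponential tail bound per trace plus a union bound) is exactly the standard Vapnik--Chervonenkis route, and Step~2 is correct as stated: the number of traces is at most $\exp(G(2n))\leqslant\exp(\Delta(1+\ln(2n/\Delta)))$, and the final assembly $2\cdot\exp(G(2n))\cdot 2\exp(-n(\varepsilon-1/n)^2)$ does reproduce the displayed inequality.

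However, the two probabilistic steps, as you instantiate them, each lose a constant and so do not deliver the stated bound. In Step~1, Chebyshev with $\mathrm{Var}\,\mu'_n(A^*)\leqslant 1/(4n)$ only guarantees $|\mu'_n(A^*)-\mu(A^*)|\leqslant 1/\sqrt{2n}$ with conditional probability $1/2$, so you obtain $\varepsilon-1/\sqrt{2n}$ rather than $\varepsilon-1/n$; the $1/n$ loss requires the sharper observation that $n\mu'_n(A^*)$ is binomial and the median of a binomial lies within $1$ of its mean. In Step~3, the assertion that Hoeffding ``extends without loss to sampling without replacement'' gives, for the count deviation in an $n$-out-of-$2n$ split, only the with-replacement binomial bound, which translates to $2\exp\bigl(-nt^2/2\bigr)$ for $t=\varepsilon-1/n$, not $2\exp(-nt^2)$. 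The extra factor of $2$ in the exponent comes precisely from the fact that the split is exactly half of the pool, which halves the hypergeometric variance; one needs either Vapnik's direct estimate of the hypergeometric tail or a Serfling-type refinement of Hoeffding for sampling without replacement. As written, your argument proves a bound of the same qualitative form (which is all the rest of the paper uses), but with $\tfrac{1}{2}(\varepsilon-1/\sqrt{2n})^2$ in place of $(\varepsilon-1/n)^2$, and hence not the inequality as stated.
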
 
The convergence is eventually like 
\[\exp (-\varepsilon^2 n),\] 
which is again a fast rate of convergence.  

A somewhat different form is quoted in \cite{devroye}: 
\[ P\left[ \sup_{A \in\mathcal{A}} |\ \mu_n (A) - \mu (A)\ |>\varepsilon \right]<8\exp \left(\Delta (1+\ln{n / \Delta})\right) \exp\left(\frac{-n\varepsilon^2}{32}\right) .\]

Except the assumption that $\mathcal{A}$ is of finite VC dimension no other information is used. 
In fact since no information about the measure $\mu$ is incorporated the left side can be replaced by its supremum taken over all possible probability measures on the underlying space $\Omega$. 

Depending on the specific case, tighter bounds may be possible, using other capacity concepts than the VC dimension and a priori knowledge about the measure $\mu$ \cite{vapnik}, \cite{mendelson:03}. 

As an example, we will go back to covering numbers.  
We can turn any $\mathcal{A}$ into a metric space by using the {\em exclusive or} under some measure. 
Using the counting measure, an exclusive or is just: 
\[A\oplus_n B := \mu_n{\left(A\cup B - A\cap B\right)}\] 
As a result we can talk of the minimal $\varepsilon$-net $\mathcal{B}_n (\mathcal{A}, \varepsilon)$ induced by $\mu_n$, or the minimal $\varepsilon$-net $\mathcal{B}(\mathcal{A}, \varepsilon)$ induced by $\mu$. 
If the VC dimension of $\mathcal{A}$ is known to be finite, the following bound holds (cf e.g. \cite{pestov:07b}): 
\[P\left[ \sup_{A \in\mathcal{A}} \left\vert \mu_n (A) - \mu (A)\right\vert >\varepsilon \right]<8\mathrm{E}_{\mu} \left[\mathcal{B}(\mathcal{A}, \varepsilon/8)\right] \exp\left(\frac{-n\varepsilon^2}{128}\right)\] 
Then perhaps it is unsurprising that there is a relationship between covering numbers of $\mathcal{A}$ and its VC dimension (cf e.g. \cite{pestov:07b}):  
\[\log \mathcal{B}_n (\mathcal{A}, \varepsilon) \leqslant \Delta (\log n +1 - \log \Delta)\] 
To free ourselves from the particular choice of sample (which depends on the unknown $\mu$), we can introduce a new concept 
\begin{definition} 
The {\em metric entropy} of $\mathcal{A}$ is \cite{devroye} the function of $\varepsilon$ defined as: 
\[N(\mathcal{A},\varepsilon )= \sup_n \sup_{\mu_n}\mathcal{B}_n (\mathcal{A}, \varepsilon)\] 
\end{definition} 

It then folows that  
\[P\left[ \sup_{A \in\mathcal{A}} \left\vert \mu_n (A) - \mu (A)\right\vert >\varepsilon \right]<8N(\mathcal{A},\varepsilon/8 ) \exp\left(\frac{-n\varepsilon^2}{128}\right)\] 

A natural restatement of these results is to ask how big does $n$ have to be for the expression on the left to be less than some $\eta >0$.  
Solving for $\eta$ we get  
\[n\geqslant \frac{128}{\varepsilon ^2}\left( \log N(\mathcal{A},\varepsilon/8 )+\log\frac{8}{\eta}\right) \] 
The use of some technical inequalities (cf e.g. \cite{pestov:07b}) yields a similar result in terms of VC dimension: 
\begin{equation} 
\label{eq:n} 
n\geqslant \frac{128}{\varepsilon ^2}\left(\Delta\log \frac{2\text{e}^2}{\varepsilon} +\log\frac{8}{\eta}\right).  
\end{equation} 

\chapter{Linking theoretical and empirical observations} 
In this chapter we are able to tie everything together for a conclusion about indexability. 
Essentially, we we are now able to go from the theoretical observation described by \eqref{eq:z1} to \eqref{eq:z2}. 
Let us be clear that we are not proving that for a given dataset all queries are linear for all pivot based indexes. 
There are several points to keep in mind: 
\begin{itemize} 
	\item Convergence in \eqref{eq:z2} is in probability: it is conceivable that for some rare datasets indexing is easy. 
	\item The analysis is asymptotic, i.e. for large $n$ and $d$, so we do not dispute that datasets of small dimension are indexable. 
	\item We assume certain properties of our underlying spaces $(\Omega_d)$ : concentration of measure has to take place, and the VC dimension of balls is small. We think it is not unreasonable to assume these are commonly present, but again they are probably not universal. 
\end{itemize} 
Let us walk through to the main result, on the way establishing a bound for $k$. 
From the previous section, we know that only for large values of $n$ will empirical measures be close (up to $\varepsilon$) to actual measures with high likelihood (1-$\eta$ ). 
The lower bound on $n$ then naturally depends on $\varepsilon$, $\eta$ but also on the VC dimension $\Delta$ of the collection of sets $\mathcal{A}_k$ we are trying to measure. 

Let us fix $\varepsilon$ and $\eta$, e.g. for our purposes both can be $1/2$. 
Then by pooling all constants, including $\varepsilon$ and $\eta$ but not $\Delta$ we can rewrite expression \eqref{eq:n} as: 
\begin{equation}\label{eq:nsimple}n\geqslant M_1 \Delta\end{equation} 
where $M_1>1$. 
What we would like to avoid is to have the right part of this expression grow linearly in $n$. 
We know an upper bound on $\Delta$ depends on $k$ and $d$ as established in Theorem \ref{thm:deltabounds}. As our concern is for asymptotic behaviour we will simplify this bound to $ k d \ln k$ . We will generalize to define (an) acceptable asymptotic behaviour for the VC dimension of balls: linear in $d$. 

Now we would like to be able to conclude that the database sizes we encounter are large enough for Theorem \ref{thm:gkg} to hold (that is the right side of \ref{eq:nsimple} is asymptotically below $n$). 
The number of pivots $k$ can potentially ruin the day. 
There are indexing schemes, like AESA \cite{zezula} or the above-mentioned Orchard's algorithm \ref{eg:orchard} where $k=n$. 
In such cases, what we have shown above is that samples much bigger than $n$ are needed for Theorem \ref{thm:gkg} behaviour. 
Alas (or fortunately?) our sample is all of $X$ and not more -- so for large $k$ we are not able to make the desired conclusions.  

There are good arguments as to why $n^2$ storage is not practical: in a database consisting of an enormous amount of records is there really space for storing an index that is a {\em square} of that very large number? (keeping Google and the Internet as a case in point). 
It has even been argued that under certain assumptions the optimal number of pivots is on the order of $\ln n$ and that even this number is rarely reached in practice \cite{chavez:1}. 
Therefore it would not be unrealistic to restrict the analysis to pivot size much smaller than $n$. 
For example we could require 
\[k=O(\log n)\] 

We also have to keep in mind that the query algorithm we have described requires at least $k$ distance computation so if $k=n$ the query is linear in $n$ which for our purposes is no better than a linear search. 

A similar discussion applies to dimension $d$: although it is possible to consider situations when the growth of $d$ is on the order of $n$, more reasonable are situations in which it grows somewhat slower, as argued in section \ref{sec:curse}. 

Combining $d=o(n)$ with the even stronger condition on $k$, we can conclude that: 
\[\Delta=o(n)\] and hence asymptotically we know that the right side of expression \eqref{eq:nsimple} falls (much) under $n$. 
Since the aim is only for $\Delta=o(n)$, the requirement on $k$ can be weakened somewhat: 
\[k=o(\frac{n}{d})\] 
This will be our standing assumption for $k$. 

Hence we have a result of the form: 
\begin{equation}\label{eq:close2} 
P(\sup_{\mathcal{C}}\vert \mu_{\#}(C)-\mu(\mathcal{C})\vert > \varepsilon)<\eta 
\end{equation} 
This is equivalent to saying that if \eqref{eq:z1} holds then so does \eqref{eq:z2}. 

As we have not made it precise in chapter \ref{ch:concentration} as to why \eqref{eq:z1} holds, we will come back to the concentration of measures in spaces $\Omega_d$. 
Restating \eqref{zerostar1} using an exponential concentration of measure function and applying the reasoning of Section \ref{sec:link}: 
\[\mu (\mathcal{C}) \leqslant M_2k\text{e}^{-dr^2}\] 
We will sacrifice a certain number of $\mathcal{C}$ so that $r$ can be considered a constant (see section \ref{sec:radius}): we will proceed with at least half the queries having radius $r$ above a constant independent of $d$. 
Hence the quantities that vary in $d$ are $n$ and $k$. 
Since $d$ is superlogarithmic in $n$, 
\begin{equation*} 
\begin{array}{ll} 
&\forall c>0 \text{, } d>c\log n\\ 
\Rightarrow& \forall c>0 \text{, } \exp(-d) <\exp(-c\log n)\\ 
\Rightarrow&\forall c>0 \text{, }\exp(-d)<cn\\ 
\end{array} 
\end{equation*} 

So $\text{e}^{-dr^2}=o(n)$, and hence 
\[ \mu (\mathcal{C}) = o(n)\] 
so not only does $\mu(\mathcal{C})\rightarrow 0$, we have a bound on the convergence as well. 
In fact this holds for at least half the queries simulataneously so: 
\[\mathrm{median}\sup_{\mathcal{C}}\mu (\mathcal{C})=o(n)\] 
This, combined with equation \eqref{eq:close2} gives us the main result. 
\section{Main result} 
\begin{theorem} 

Consider a sequence of metric spaces $(\Omega_d,\rho_d)$ where $d=1,2,3,\ldots$ and the VC dimension of closed balls in space $(\Omega_d,\rho_d)$ as a function of $d$ is $O(d)$. 
\newline  
Furthermore the metric spaces come equipped with  Borel probabilities $\mu_d$ such that for fixed $C,c>0$ the sequence of concentration functions of $(\Omega_d,\rho_d,\mu_d)$ satisfies \[\forall\epsilon>0,\quad\alpha_d(\epsilon)\leqslant C\text{e}^{-c\epsilon^2d}\] 
For each $d$ an i.i.d. sample $X_d$ of size $n_d$ is selected from $\Omega_d$, according to $\mu_d$. 
The sample size $n_d$ is such that $d=\omega(\log n_d)$ and $d=n_d^{o(1)}$ if $d$ is expressed as a function of $n_d$. 
We treat $X_d$ as a dataset on which to build an index for similarity search. 
The index built is a pivot index using $k$ pivots, where \[k=o(n_d/d).\] 
We fix any small $\varepsilon,\eta >0$ as desired. 
Suppose we only ask queries whose radius is equal or greater to the distance to nearest neighbour of query centre $q\in\Omega_d$ in $X_d$. 
\newline\bfseries  
Then there exists a $D$ such that for all $d\geqslant D$, the probability that {\em at least half} the queries on dataset $X_d$ take less than $(1-\varepsilon)n_d$ time is less than $\eta$. 

Furthermore, if we allow the likelihood $\eta$ to depend on $d$, we can pick $\eta_d$ so that the above holds true and 
\[\lim_{D\rightarrow\infty}{\prod_{d=D}^{\infty}(1-\eta_d)} = 1\] 
We emphasize that this result is independent of the selection of pivots. 
\end{theorem}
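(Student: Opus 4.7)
The plan is to combine the three main ingredients of the paper: the bound on $\mu(\mathcal{C})$ from the concentration of measure in Section~\ref{sec:link}, the Gromov--Milman-style lemma of Section~\ref{sec:radius} which prevents the query radius from collapsing to zero, and the VC-dimension machinery of Chapter~3 which transfers measure estimates from $\mu_d$ to the empirical $\mu_{\#}$. First I would fix $\varepsilon,\eta>0$ and set a constant $r_0>0$ (provided by the theorem of Section~\ref{sec:radius}) such that at least half of admissible queries have radius $r\geqslant r_0$. For the remaining queries, the Lipschitz computation of Section~\ref{sec:link} combined with the normal concentration hypothesis yields
\[\mu_d(\mathcal{C}_{q,p_1,\ldots,p_k,r})\;\leqslant\;2k\,\alpha_d(r/2)\;\leqslant\;2kC\exp(-cr_0^{2}d/4).\]
Because $k=o(n_d/d)$ and $d=\omega(\log n_d)$, the right-hand side is $o(1)$; in fact it decays faster than any polynomial in $1/n_d$, giving a uniform deterministic bound $\sup_{\mathcal{C}\in\mathcal{A}_k}\mu_d(\mathcal{C})\to 0$.

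The next step is to promote this statement about $\mu_d$ into a statement about $\mu_{\#}$ uniformly over the class $\mathcal{A}_k$ of sets described in \eqref{eq:A}. By Theorem~\ref{thm:deltabounds} the VC dimension $\Delta$ of $\mathcal{A}_k$ is at most $O(kd\ln k)$, so under the assumption $k=o(n_d/d)$ we have $\Delta=o(n_d)$. Applying Theorem~\ref{thm:gkg} (or equivalently the sample-size bound \eqref{eq:n}) with the fixed tolerance $\varepsilon/2$ and confidence parameter $\eta$, we get that for all $d$ large enough,
\[\mathbb{P}_{X_d}\!\left[\sup_{\mathcal{C}\in\mathcal{A}_k}\bigl|\mu_{\#}(\mathcal{C}\cap X_d)-\mu_d(\mathcal{C})\bigr|>\varepsilon/2\right]<\eta.\]
Combined with the deterministic bound on $\mu_d(\mathcal{C})$, this forces $\mu_{\#}(C)\leqslant\varepsilon$ uniformly, simultaneously for every choice of pivots $p_1,\ldots,p_k$, query centre $q$ and radius $r\geqslant r_0$. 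Since the pivot-index cost of a query is $k+|X\setminus C|=k+(1-\mu_{\#}(C))n_d$ and $k=o(n_d/d)=o(n_d)$, this cost is at least $(1-\varepsilon)n_d$ for all such queries, and we have already observed that at least half of all queries fall in this regime.

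For the strengthening to a $d$-dependent confidence $\eta_d$, I would exploit the fact that the VC bound \eqref{eq:n} is exponential in $n_d$ for fixed $\varepsilon$ and $\Delta=o(n_d)$: explicitly, one may take $\eta_d\leqslant 8\exp\!\bigl(\Delta(1+\log(n_d/\Delta))\bigr)\exp(-n_d\varepsilon^{2}/128)$, which decays faster than any polynomial in $1/d$ because $d$ is subpolynomial in $n_d$. Hence $\sum_d\eta_d<\infty$, which by the standard identity for convergence of infinite products forces $\prod_{d\geqslant D}(1-\eta_d)\to 1$ as $D\to\infty$.

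The main obstacle, in my view, is not any single inequality but rather the bookkeeping needed to verify that the three regimes $d=\omega(\log n_d)$, $d=n_d^{o(1)}$, and $k=o(n_d/d)$ line up simultaneously with the VC-dimension bound $\Delta=O(kd\ln k)$ and the concentration bound $2kCe^{-cr_0^{2}d/4}$: one must ensure that the exponentially small $\mu_d$-estimate truly dominates the sub-polynomial blow-up from $\Delta$ inside the Vapnik--Chervonenkis tail, and that the ``half the queries'' loss from Section~\ref{sec:radius} composes correctly with the ``half of subsets'' median clause in the definition of $\alpha_d$. Once these constants are tracked, the rest reduces to inserting the bounds into \eqref{eq:n} and reading off the asymptotics.
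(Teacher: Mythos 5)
Your proposal is correct and follows essentially the same route as the paper's own argument: the Lipschitz/concentration bound on $\mu_d(\mathcal{C})$ from Section~\ref{sec:link}, the radius lower bound of Section~\ref{sec:radius} to retain half the queries, the VC-dimension bound $\Delta=O(kd\ln k)=o(n_d)$ fed into Theorem~\ref{thm:gkg} to pass to the empirical measure, and the $1-\sum\eta_d\leqslant\prod(1-\eta_d)$ estimate for the infinite product. If anything, your bookkeeping (e.g.\ splitting the tolerance as $\varepsilon/2$ and noting the bound is $n_d^{-\omega(1)}$) is more careful than the paper's sketch, which loosely writes $\mu(\mathcal{C})=o(n)$ where it means a quantity decaying faster than any polynomial in $1/n_d$.
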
 


We have done all the calculations to show this theorem except demonstrate the limit. 
We will nevertheless make a quick summary. 
First of all, we will refer to Theorem \ref{thm:deltabounds} and Chapter \ref{ch:concentration} to provide us examples of spaces that meet all of the above requirements. 
It must be said that these families of spaces are not peculiar counterexamples but are commonly used in indexing today -- except of course that only the low dimensional ones are successfully being indexed. 

We showed above how the results of Section \ref{sec:link} use the Concentration of Measure to deduce the ``theoretical version'' of linear querying. 
We could only show it for all range queries above some fixed $r>0$, but this is not an obstacle as in all spaces that we talk of distances are scaled appropriately. In particular the expected distance to nearest neighbour approaches a constant as $d\rightarrow\infty$. 
So by adding this requirement of minimum radius we are merely disregarding queries that return no elements except perhaps the query center itself. 
To link this to querying in a discrete structure that is a dataset, we used the previous chapter's results where the assumptions included that the VC dimension of balls is small. 
The result then is that querying in these spaces is linear for all high dimensions. 

We ignored $k$ from calculating the runtime as it becomes asymptotically insignificant with increasing dimension. 
A $k$ larger than that stipulated in the main result will matter, but may trivially give linear runtime in $n_d$ (e.g.  if $k$ is linear in $n_d$ ). 
What remains unanswered here is the behaviour for $k$ in between $o(n_d/d)$ and $\Theta(n_d)$. 

To derive the asymptotic probability, equation \ref{eq:n} can be used to obtain the lower bound on $\eta$: 

\[\eta\geqslant \exp \left( \Delta\log \left(\frac{2\text{e}^2}{\varepsilon}\right)+\log8-\frac{\varepsilon^2 n}{128}\right)\] 

which as an asymptotic function of $d$ transforms into 
\[\eta=\exp (-d^{\omega(1)})\] 

Assuming independent choices of the datasets $X_d$, and assuming that for each $d$ the probability of an event is at least $1-\eta_d$, we will estimate the quantity 

\[\prod_{d=D}^{\infty}(1-\eta_d)\] 

As $\eta_d$ goes to 0 at least as fast as $\text{e}^{-d}$, it is enough to show that 

\[\lim_{D\rightarrow\infty}{\prod_{d=D}^{\infty}(1-\text{e}^{-d})} = 1\] 

to have  

\[\lim_{D\rightarrow\infty}{\prod_{d=D}^{\infty}(1-\eta_d)} = 1\] 

as well. 

Observing \cite{ash} that for any sequence $0\leqslant\eta_d\leqslant 1$, 
\[1-\sum_{d=1}^{N}{\eta_d}\leqslant \prod_{d=1}^{N}(1-\eta_d)\leqslant \exp\left(\sum_{d=1}^{N}{-\eta_d}\right)\] 
we can extend this, for any $D$ to: 
\[1-\sum_{d=D}^{\infty}{\eta_d}\leqslant \prod_{d=D}^{\infty}(1-\eta_d)\leqslant \exp\left(\sum_{d=D}^{\infty}{-\eta_d}\right)\] 
We know that as a geometric series, 
\[\sum_{d=D}^{\infty}{\text{e}^{-d}}=\frac{\text{e}^{-d}}{1-\text{e}}\] 
Hence we can conclude that  
\[\lim_{D\rightarrow\infty}{\prod_{d=D}^{\infty}(1-\text{e}^{-d})} = 1\] 
\qed  

The examples given above of the various spaces exhibiting normal concentration of measure should convince the reader that it is real and widespread, though of course not universal. 
For the VC dimension of balls to depend linearly on $d$ is a more vague requirement as the definition of dimension for metric spaces for the purposes of similarity search is an unresolved problem. 
It is however clearly impossible to take out dimension from a discussion of the curse of dimensionality, which is after all what we have shown, caveats notwithstanding. 

This is not the first lower bound result for pivoting algorithms or indexes in general. 
There has been research for some time into lower bounds for various cases, though most often for approximate algorithms e.g. \cite{chakrabarti}. A specific lower bound for pivot-based indexing already mentioned above is that of \cite{chavez:2}: 
\[\tilde{d}\log n\] 
It is not asymptotic, and assumes that $k=\Theta(\log n)$. 
Furthermore the pivot selection is assumed to be random, as opposed to our bound that is applicable to {\em any} pivot selection technique. 

If we are to apply our restrictions on $d$ (assumed to be asymptotically equivalent to $\tilde{d}$), the result is that 
\[\tilde{d}\log n = \omega(\log^2 n)\] 
\[\tilde{d}\log n=n^{o(1)}\] 
So if we are to use these results asymptotically they do not provide strong lower bounds on the cost of similarity search. 
\chapter {Indexing experiments} 
We have borrowed basic code libraries from \cite{sisap} and followed up on the techniques and datasets in \cite{bustos} to simulate variations of the {\em incremental pivot selection} building algorithms.  
The aim of this section is to test-drive different approaches to selecting pivots to demonstrate the challenges that arise from the curse of dimensionality. 

The construction of a pivot index is essentially a matter of finding the ``right'' pivots. 
It is well known that a non-random selection of pivots can improve similarity search. 
The most commonly used technique \cite{bustos} is called incremental selection. 
The broad aim is to maximize the average distance $\rho_{p_1,\ldots,p_i}$ for each $1\leqslant i\leqslant k$. 
That is, we are doing a greedy search: at each step we find a pivot that maximizes average $\rho_{p_1,\ldots,p_i}$ and add it to our pivot list. 

To estimate the average value of $\rho_{p_1,\ldots,p_i}$, that is: 
\[\mathrm{E}_{\Omega\times\Omega}\left(\rho_{p_1,\ldots,p_i}(\omega_1,\omega_2)\right)\] 
we use a sample of $A$ pairs from $X$, and by averaging $\rho_{p_1,\ldots,p_i}(x,y)$ over all the pairs we calculate an estimate. 
In practice then we are not maximizing distance $\rho_{p_1,\ldots,p_i}$ but an $A$-sample version of it. 
Since our spaces have bounded diameter, calculating sample size in principle is not complicated, e.g. using \eqref{hoeffding}. 
The curse of dimensionality interferes however by requiring the estimation to be ever more precise for higher dimensions. 
Due to space and time limitations tradeoff between large values of $A$ and the other sample - that of pivots - must be found experimentally. 

The sample of pivots is made at each step -- $N$ random choices for the next candidate pivot are made, and the one that maximizes the ($A$-sample) distance gets added. 
Once we reach $k$ pivots, we calculate the distances from $X$ to the set of pivots. 
That distance matrix, together with the pivot list constitute the index which can be used for similarity search as already described in Algorithm \ref{alg:pivots}. 
This construction procedure is summarized in Algorithm \ref{alg:construct}. 

\begin{algorithm*}[h] 
\SetLine 
\KwData{dataset, k} 
\KwResult{pivot-list, distance-matrix} 
Sample A pairs from dataset\; 
pivot-list: \{\} \; 
\For {i in 1 to k}{ 
Select N candidates for a pivot\; 
Find pivot maximizing distance based on pairs\; 
Add selected pivot to pivot-list\; 
} 
calculate distance-matrix\; 
return pivot-list, distance-matrix\; 
\caption{constructing a pivot-based index} 
\label{alg:construct} 
\end{algorithm*} 

A single experiment consists of constructing this index for a given dataset, and then running thousands of queries to calculate the average number of distance computations required to answer a query. 
As queries with larger radii are very slow, the experimentation, as in \cite{bustos}, is restricted to queries that return on the order of 10 to 100 elements out of a typical dataset size of 100,000. 

We have expanded on incremental pivot selection procedure by selecting pairs of points from $X$ that lie close to each other. 
The underlying idea is to concentrate on points that are hard to separate because they are so close together. 

We will not explicitly analyze the costs of doing a random sample versus this ``smart'' sample since it is clearly very expensive to arrive at the smart sample: if no such list exists a priori one would need to compile it by making an index (which incidentally was our original goal) and then making small radius range or nearest neighbour queries. 
Situations are however conceivable where such a list can be obtained at low cost. 

\begin{figure*} 
\centering 
		\includegraphics[height=3in,width=4.5in]{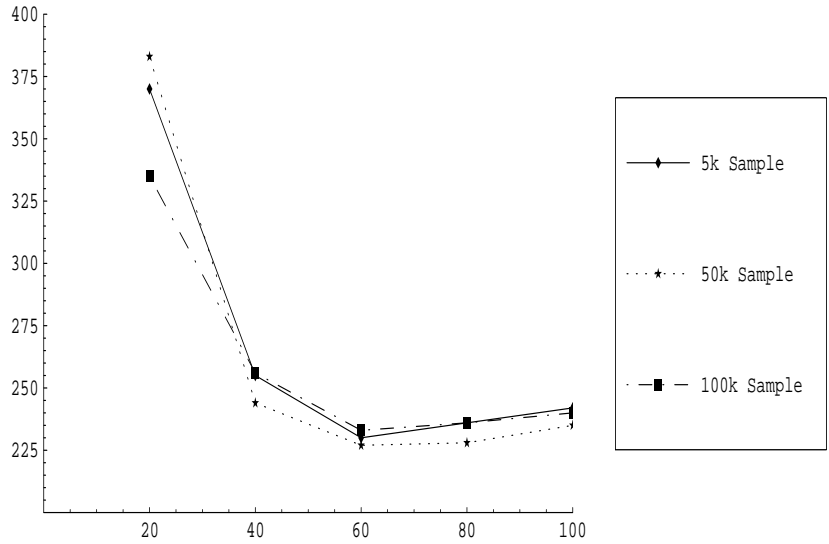} 
	\caption{\small Distance computations vs number of pivots for the NASA dataset} 
	\label{fig:resNS} 
\end{figure*} 
In Figure \ref{fig:resNS} we have plotted results for a dataset coming from compressed image data from NASA \cite{dimacs}. 
This dataset has 40000 20-dimensional vectors, equipped with the Euclidian $L^2$ metric. 
Incidentally its estimated Ch\'avez intrinsic dimension is 3.9. 
This dataset presented certain challenges as results lacked the stability of other datasets, e.g. for pivot size 20 the difference between worst and best result for the same parameters can be as high as 14 distance computations. 
This uncertainty remains even after varying the pivot sample size $N$. In our experiments we leave $N$ fixed at 40 since varying it doesn't change results either, not just their variance. 
On the other hand this difference of 14 represents only about $4\%$ so it doesn't affect the broad conclusions. 
The experiment was run using different sample sizes for the incremental selection technique: 5000, 50000, and 10000. 
Testing was done by averaging 5000 queries that return an average of 40 points (that is $0.1\%$ of the dataset). 
The overall result is that indexes based on the different samples all achieve about the same performance, with the optimal number of pivots around 60. 
The conclusion is that significant savings in construction time can be achieved by reducing sample size. 
Query time and index size however remain the same for all 3 cases. 

\begin{figure*} 
\centering 
		\includegraphics[height=3in,width=4.5in]{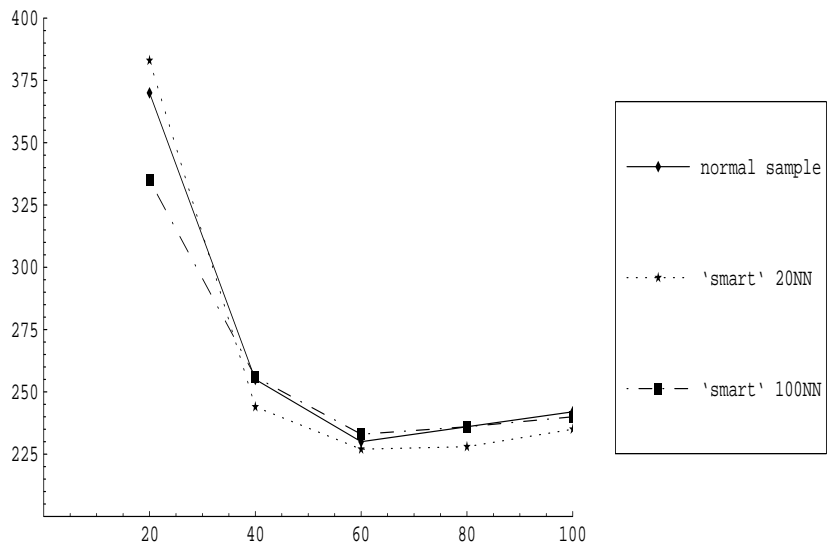} 
	\caption{\small Distance computations vs number of pivots for the NASA dataset} 
	\label{fig:resNM} 
\end{figure*} 

Figure \ref{fig:resNM} illustrates what happens when we select ``smart'' samples instead of random $A$ pairs as before. 
All sample sizes are 5000, testing done by averaging 5000 queries. 
We produce two sets of smart samples: one based on running 20-nearest-neighbour (20NN) queries, where 5000 random queries are made and their 20th nearest neighbour is selected as the second element of the pair, thus producing 5000 pairs. 
The second sample is produced similarly using 100NN queries. 
Here we notice that the choice of which smart sample to select matters somewhat, as the 20NN leads to a slightly better performing index. 
The difference is however small and must be balanced with the expense of construction for as we mentioned before we need to build a preliminary index to create the smart sample. 

\begin{figure*} 
\centering 
		\includegraphics[height=3in,width=4.5in]{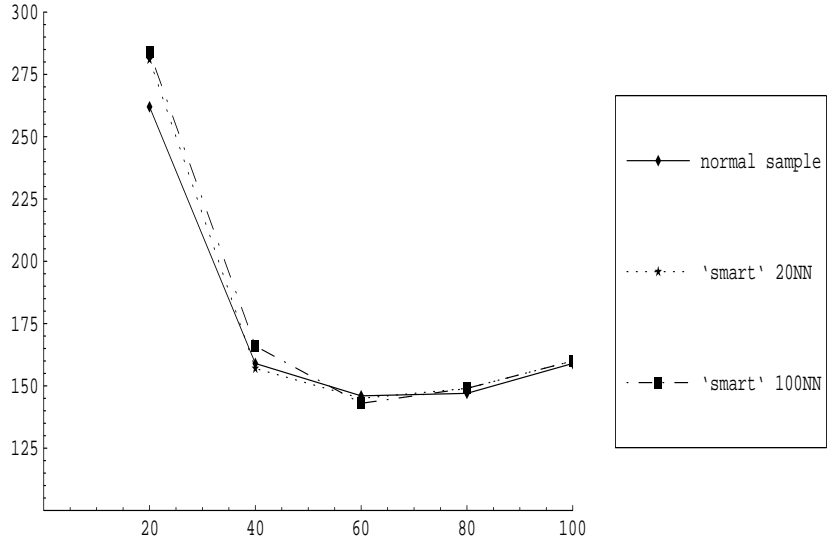} 
	\caption{\small Distance computations vs number of pivots for an 8-dimensional uniform cube dataset} 
	\label{fig:res1} 
\end{figure*} 

We repeat the smart sample experiments on a synthetic dataset: the uniform cube of dimension 8. 
The results here show even less difference between random and smart samples of $A$ pairs. 
In addition variation is very low, with repeat experiments giving virtually identical results. 
Perhaps more interesting is to do simulations where the dimension $d$ rises. 
Unfortunately hardware and software limitations severely limit the number $d$, with our setup running out of memory at $d=14$. 
We can point however to results by \cite{bustos} where performance for optimal number of pivots is plotted against various values of $d$ from 2 to 14, for uniformly distributed cubes. 
The results clearly demonstrate exponential growth in $d$ no matter the pivot selection technique. 
In other words, so far, no matter how contrived the building algorithm for pivot-based indexing we have yet to find one that breaks the curse of dimensionality. 
Our proposal to vary the sample selection for the incremental technique performs just like the rest in that respect. 

In summary, it is impossible to prove the curse of dimensionality by simulations alone as we could always ask ourselves maybe another index building algorithm exists. 
This is why our theoretical result is so powerful: we are able to make the conclusion for {\em any} pivot-index building method. 
We are leaving thus several options for the practitioner: to try better indexes, control intrinsic dimension, or limit the size of the dataset. 

The solution so far seems to have been to simply mostly avoid high-dimensional datasets. 
Bar some breakthrough in indexing technology the advice then for those dealing with truly high-dimensional data is to limit the dataset size in relation to available processing power: this means no exponential growth in size is to be allowed. 

\bibliographystyle{alpha} 
 
\end{document}